
%
%
	%
%
%
%
%
%
%
%
%
%
	%
	\documentclass[
		paper=a4,11pt,american,pagesize,%
		abstract=true,DIV=12,headinclude=true,headlines=0,footinclude=true,footlines=-5,twoside=semi,%
	]{scrartcl}
	\def\docclass{koma}
	\def\version{arxiv}
	\def\draftmode{false} %
\usepackage[utf8]{inputenc}
\usepackage{hyperref}
\makeatletter
%

\usepackage{xifthen}

\newcommand\iflipics[2]{\ifthenelse{\equal{\docclass}{lipics}}{#1}{#2}}
\newcommand\ifkoma[2]{\ifthenelse{\equal{\docclass}{koma}}{#1}{#2}}
\newcommand\ifieee[2]{\ifthenelse{\equal{\docclass}{ieee}}{#1}{#2}}
\newcommand\ifsiam[2]{\ifthenelse{\equal{\docclass}{siam}}{#1}{#2}}
\newcommand\ifsiamsingle[2]{\ifthenelse{\equal{\docclass}{siam-single}}{#1}{#2}}
\newcommand\ifmysiam[2]{\ifthenelse{\equal{\docclass}{my-siam}}{#1}{#2}}
\newcommand\ifacm[2]{\ifthenelse{\equal{\docclass}{acm}}{#1}{#2}}
\newcommand\ifdcc[2]{\ifthenelse{\equal{\docclass}{dcc}}{#1}{#2}}
\newcommand\ifspringerjournal[2]{\ifthenelse{\equal{\docclass}{springer-journal}}{#1}{#2}}
\newcommand\iflncs[2]{\ifthenelse{\equal{\docclass}{lncs}}{#1}{#2}}
\ifthenelse{ \equal{\docclass}{lipics} \OR \equal{\docclass}{koma} \OR \equal{\docclass}{ieee} \OR \equal{\docclass}{siam} \OR \equal{\docclass}{my-siam} \OR \equal{\docclass}{acm} \OR \equal{\docclass}{dcc} \OR \equal{\docclass}{springer-journal} \OR \equal{\docclass}{lncs} }{
	\PackageInfo{paper}{Building paper with docclass = \docclass} 
}{
	\PackageWarning{paper}{docclass = "\docclass", but must be one of "lipics", "koma", "ieee", "siam", "siam-single", "my-siam", "acm", "dcc", "springer-journal", "lncs"}
}

\newcommand\ifmanuscript[2]{\ifthenelse{\equal{\version}{manuscript}}{#1}{#2}}
\newcommand\ifarxiv[2]{\ifthenelse{\equal{\version}{arxiv}}{#1}{#2}}
\newcommand\ifsubmission[2]{\ifthenelse{\equal{\version}{submission}}{#1}{#2}}
\newcommand\ifproceedings[2]{\ifthenelse{\equal{\version}{proceedings}}{#1}{#2}}
\ifthenelse{ 
	\equal{\version}{manuscript} 
	\OR \equal{\version}{arxiv} 
	\OR \equal{\version}{submission} 
	\OR \equal{\version}{proceedings} 
}{
	\PackageInfo{paper}{Building paper version = \version} 
}{
	\PackageWarning{paper}{version = "\version", but must be one of "manuscript", "arxiv", "submission", "proceedings"}
}

\newcommand\ifdraft[2]{\ifthenelse{\equal{\draftmode}{true}}{#1}{#2}}
\ifthenelse{ \equal{\draftmode}{true} \OR \equal{\draftmode}{false} }{
	\PackageInfo{paper}{Building paper with draftmode = \draftmode} 
}{
	\PackageWarning{paper}{draftmode = "\draftmode", but must be "true" or "false"}
}


\usepackage[T1]{fontenc}
\ifsiam{
	\usepackage{lmodern}
	\usepackage{slantsc}
}{}
\ifsiamsingle{
	\usepackage{lmodern}
	\usepackage{slantsc}
}{}
\ifmysiam{
	\usepackage{lmodern}
	\usepackage{slantsc}
}{}
\ifkoma{
	\usepackage{lmodern}
	\usepackage{slantsc}
}{}
\iflipics{
	\usepackage{lmodern}
	\usepackage{slantsc}
}{}
\ifdcc{
	\usepackage{lmodern}
	\usepackage{slantsc}
}{}
\ifspringerjournal{
	\usepackage{lmodern}
	\usepackage{slantsc}
	\usepackage{xcolor}
}{}
\iflncs{
	\usepackage{lmodern}
	\usepackage{slantsc}
	\usepackage{xcolor}
}{}

\usepackage{babel}
\input{ushyphex.tex} 

\usepackage{array,multicol}
\ifieee{
	\usepackage[cmex10]{amsmath,mathtools}
	\usepackage{amsfonts,amssymb}
}{}
\ifkoma{
	\usepackage{amsmath,amsfonts,amssymb,mathtools}
}{}
\iflipics{
	\usepackage{amsmath,amsfonts,amssymb,mathtools}
}{}
\ifsiam{
	\usepackage{amsmath,amsfonts,amssymb,mathtools}
}{}
\ifsiamsingle{
	\usepackage{amsmath,amsfonts,amssymb,mathtools}
}{}
\ifmysiam{
	\usepackage{amsmath,amsfonts,amssymb,mathtools}
}{}
\ifacm{
	\usepackage{mathtools}
}{}
\ifdcc{
	\usepackage{amsmath,amsfonts,amssymb,mathtools}
}{}
\ifspringerjournal{
	\usepackage{amsmath,amsfonts,amssymb,mathtools}
}{}
\iflncs{
	\usepackage{amsmath,amsfonts,amssymb,mathtools,comment}
}{}

\usepackage{mleftright}\mleftright 
\usepackage{relsize,xspace,booktabs,adjustbox,needspace,pbox,relsize}
\ifieee{
		
		\usepackage{enumitem}
}{
	\ifspringerjournal{
		\usepackage{enumitem}
		\setlist{topsep=\medskipamount}
	}{
		\usepackage{enumitem}
	}
}
\usepackage{graphicx}

\ifacm{}{
	\usepackage{colonequals}
}

\iflncs{
	\usepackage[american]{wref}
}{
	\usepackage{wref}
}


%

\newdimen\makeboxdimen


\ifthenelse{\equal{\docclass}{koma} \OR \equal{\docclass}{my-siam}}{
	\setlength\parindent{1.5em}
	\usepackage[headsepline]{scrlayer-scrpage}
	\pagestyle{scrheadings}
	\clearscrheadfoot
	\AtBeginDocument{%
		\automark[section]{}%
	}
	\ohead{\pagemark}
	\rehead{\mytitle}
	\lohead{\headmark}
	\addtokomafont{caption}{\sffamily\small}
	\addtokomafont{captionlabel}{\sffamily\textbf}
	\setcapmargin{2em}
}{}
\ifmysiam{
	\setcapmargin{1em}
	\setcapindent{0em}
}{}
\ifmysiam{
	\setlength\parskip{0pt}
	\RedeclareSectionCommand[
		beforeskip=-1.25\baselineskip,
		afterskip=0.75\baselineskip,
	]{section}
	\RedeclareSectionCommand[
		beforeskip=-1\baselineskip,
		afterskip=-1.5em,
	]{subsection}
	\RedeclareSectionCommand[
		beforeskip=-1\baselineskip,
		afterskip=-1.5em,
	]{subsubsection}
	\RedeclareSectionCommand[
		beforeskip=-.25\baselineskip,
		indent=1.5em,
		afterskip=-1em,
	]{paragraph}
}{}
\ifdcc{
	\ifproceedings{}{
		\pagestyle{plain}
		\setlength{\footskip}{5ex}
	}
}{}
\iflncs{
	\ifproceedings{}{
		\pagestyle{plain}
	}
}{}

\AtBeginDocument{%
	\let\mytitle\@title%
}


\ifsiam{
	\usepackage[bibtex-alpha]{url-doi-arxiv}
}{}
\ifsiamsingle{
	\usepackage[bibtex-alpha]{url-doi-arxiv}
}{}
\ifmysiam{
	\usepackage[bibtex-alpha]{url-doi-arxiv}
}{}
\ifkoma{
	\usepackage[bibtex-alpha]{url-doi-arxiv}
}{}
\iflipics{
	\usepackage[bibtex]{url-doi-arxiv}
}{}
\ifdcc{
	\usepackage[bibtex-alpha]{url-doi-arxiv}
}{}
\ifspringerjournal{
	\usepackage[bibtex-alpha]{url-doi-arxiv}
}{}
\iflncs{
	\bibliographystyle{splncs04} 
}{}

\let\oldthebibliography\thebibliography
\renewcommand\thebibliography[1]{%
	\oldthebibliography{#1}%
	\pdfbookmark[1]{References}{}%
}

%

\usepackage{lscape} 

\ifkoma{
	\usepackage{float}
	\floatstyle{boxed}
	\usepackage{newfloat}
}{}
\iflipics{
	\usepackage{newfloat}
	\DeclareFloatingEnvironment[%
			name=Algorithm,%
			placement=thb,%
		]{algorithm}
}{}
\ifmysiam{
	\usepackage{newfloat}
	\DeclareFloatingEnvironment[%
			name=Algorithm,%
			placement=thb,%
		]{algorithm}
}

\ifdcc{
	\usepackage{float}
	\usepackage[font={small},labelfont=bf]{caption}
}{}
\ifmysiam{
	\setcounter{topnumber}{3}
	\setcounter{bottomnumber}{3}
	\setcounter{totalnumber}{3}     
	\setcounter{dbltopnumber}{3}    

}{}
\ifsiam{

	\setcounter{topnumber}{3}
	\setcounter{bottomnumber}{3}
	\setcounter{totalnumber}{3}     
	\setcounter{dbltopnumber}{3}    


}{}
\ifsiamsingle{

	\setcounter{topnumber}{3}
	\setcounter{bottomnumber}{3}
	\setcounter{totalnumber}{3}     

}{}

\usepackage{dcolumn}



\usepackage{textcomp} 
\usepackage{listings}

\lstset{
	columns=flexible,
	basewidth={.5em,.5em}, 
	aboveskip=0.5\baselineskip,
	belowskip=0.5\baselineskip,
	tabsize=4,
	emptylines=*1,
	breaklines,
	breakatwhitespace,
	breakindent=30pt,
	prebreak=\raisebox{-1ex}{\hbox{$\hookleftarrow$}},
	upquote,
	showstringspaces=false,
	basicstyle=\small\ttfamily,
	commentstyle=\ttfamily\itshape{},
	identifierstyle=\ttfamily\slshape{},
	keywordstyle=\bfseries,
	numbers=left,
	numberstyle=\ttfamily\tiny{},
	escapechar=|,
	numberblanklines=false,
	captionpos=t,
	numberbychapter=false
}






%
%

\usepackage{tikz}

\usetikzlibrary{positioning,arrows.meta,fit}
\usetikzlibrary{backgrounds,calc,trees,graphs}
\usetikzlibrary{shapes.geometric,shapes.misc}

\pgfdeclarelayer{background}
\pgfsetlayers{background,main}

\usetikzlibrary{external}
\tikzexternalize[mode=list and make]
\tikzsetexternalprefix{pics/externalized/}
\tikzset{
	external/system call={%
		lualatex \tikzexternalcheckshellescape -halt-on-error %
			-interaction=batchmode -jobname "\image" "\texsource"%
	},
}
\tikzset{external/export=false} 

\newcommand{%
	\begin{tikzpicture}%
	\node[inner sep=0pt] {\input{}};
	\end{tikzpicture}
}[1]{%
	\begin{tikzpicture}%
	\node[inner sep=0pt] {\input{#1}};
	\end{tikzpicture}
}



%

%


\iflipics{
	\newtheorem{fact}[theorem]{Fact}

	\newenvironment{proofof}[1]{%
		\begin{proof}[{{Proof of #1{}}}]%
	}{%
		\end{proof}%
	}
}{}
\ifacm{
	\AtEndPreamble{
		\theoremstyle{acmdefinition}
		\newtheorem{remark}[theorem]{Remark}
		\newtheorem{fact}[theorem]{Fact}
	}
	
	\newenvironment{proofof}[1]{%
		\begin{proof}[{{Proof of #1{}}}]%
	}{%
		\end{proof}%
	}
}{}
\ifsiam{
	\newtheorem{remark}{Remark}
	\newenvironment{proofof}[1]{%
		\begin{proof}[{{#1{}}}]%
	}{%
		\end{proof}%
	}
}{}
\ifsiamsingle{
	\newtheorem{remark}{Remark}
	\newenvironment{proofof}[1]{%
		\begin{proof}[{{#1{}}}]%
	}{%
		\end{proof}%
	}
}{}
\iflncs{
	\spnewtheorem{fact}[theorem]{Fact}{\itshape}{}
	
	\let\orig@endproof\endproof
	\def\endproof{\qed\orig@endproof}
	\newenvironment{proofof}[1]{%
		\begin{proof}[{{#1{}}}]%
	}{%
		\end{proof}%
	}
}{}
\ifthenelse{%
		\equal{\docclass}{lipics} \OR \equal{\docclass}{siam} \OR 
		\equal{\docclass}{siam-single} \OR \equal{\docclass}{acm} \OR
		\equal{\docclass}{lncs}%
}{}{
	\usepackage[amsmath,hyperref,thmmarks]{ntheorem}
	
	\theorembodyfont{\slshape}
	\theoremseparator{:}
	\newtheoremstyle{proofstyle}%
	  {\item[\theorem@headerfont\hskip\labelsep ##1\theorem@separator]}%
	  {\item[\theorem@headerfont\hskip\labelsep ##3\theorem@separator]}
	
	\theorempreskip{\topsep} 

	\theoremsymbol{\adjustbox{scale=.8}{$\triangleleft\mkern-1mu$}}
	
	\newtheorem{theorem}{Theorem}[section]
	
	\theoremstyle{plain}
	\theorempreskip{\topsep}

	\newtheorem{lemma}[theorem]{Lemma}

	\theoremstyle{plain}
	\theorembodyfont{\upshape}

	\newtheorem{remark}[theorem]{Remark}
	\newtheorem{example}[theorem]{Example}
	
	\theoremsymbol{\raisebox{-.25ex}{$\Box$}}
	\qedsymbol{\raisebox{-.25ex}{$\Box$}}
	
	\theoremstyle{proofstyle}
	\newtheorem{proof}{Proof}
	\newenvironment{proofof}[1]{%
		\begin{proof}[{{Proof of #1{}}}]%
	}{%
		\end{proof}%
	}

}

\iflipics{
		\newenvironment{thmenumerate}[2][]{%
			\begin{enumerate}[
				label={\textsf{\textbf{\color{darkgray}{\makebox[\widthof{(a)}][c]{\textup{(\alph*)}}}}}},
				ref={\ref{#2}\kern.1em--\kern.1em(\alph*)},
				itemsep=0pt,
				topsep=.5ex,
				leftmargin=1.75em,
				#1
			]%
		}{%
			\end{enumerate}%
		}
}{
	\ifspringerjournal{
		\newenvironment{thmenumerate}[2][]{%
			\begin{enumerate}[
				label={\makebox[\widthof{(a)}][c]{\textup{(\alph*)}}},
				ref={\ref{#2}\kern.1em--\kern.1em(\alph*)},
				itemsep=0pt,
				topsep=\smallskipamount,
				leftmargin=1.75em,
				#1
			]%
		}{%
			\end{enumerate}%
		}
	}{
		\newenvironment{thmenumerate}[2][]{%
			\begin{enumerate}[
				label={\makebox[\widthof{(a)}][c]{\textup{(\alph*)}}},
				ref={\ref{#2}\kern.1em--\kern.1em(\alph*)},
				itemsep=0pt,
				#1
			]%
		}{%
			\end{enumerate}%
		}
	}
}

%

\newcommand*\ie{\mbox{i.\hspace{.2ex}e.}}
\newcommand*\eg{\mbox{e.\hspace{.2ex}g.}}

%

\newcommand\N{\mathbb N}

\usepackage{fixmath}





\newcommand{\ESymbol}{\mathbb{E}}

\newcommand{\ProbSymbol}{\ensuremath{\mathbb{P}}}
\providecommand{\given}{}
\DeclarePairedDelimiterXPP\Prob[1]{\ProbSymbol}[]{}{%
	\renewcommand\given{\nonscript\:\delimsize\vert\nonscript\:\mathopen{}}%
	#1%
}
\DeclarePairedDelimiterXPP\E[1]{\ESymbol}[]{}{%
	\renewcommand\given{\nonscript\:\delimsize\vert\nonscript\:\mathopen{}}%
	#1%
}
\DeclarePairedDelimiterXPP\Eover[2]{\ESymbol_{#1}}[]{}{%
	\renewcommand\given{\nonscript\:\delimsize\vert\nonscript\:\mathopen{}}%
	#2%
}
\DeclarePairedDelimiterXPP\ProbIn[2]{\ProbSymbol_{#1}}[]{}{%
	\renewcommand\given{\nonscript\:\delimsize\vert\nonscript\:\mathopen{}}%
	#2%
}
\providecommand{\Prob}{} 
\providecommand{\ProbIn}{} 
\providecommand{\E}{} 
\providecommand{\Eover}{} 



\newcommand{\surroundedmath}[3]{
	\mathchoice{
		#1{#2{#3}#2}%
	}{
		#1{#3}%
	}{
		#1{#3}%
	}{
		#1{#3}%
	}%
}
\newcommand\rel[1]{\surroundedmath{\mathrel}{\:}{#1}}
\newcommand\wrel[1]{\surroundedmath{\mathrel}{\;}{#1}}
\newcommand\wwrel[1]{\surroundedmath{\mathrel}{\;\;}{#1}}
\newcommand\bin[1]{\surroundedmath{\mathbin}{\:}{#1}}
\newcommand\wbin[1]{\surroundedmath{\mathbin}{\;}{#1}}

\newcommand\ppe{\phantom{=}}

\iflipics{}{
\iflncs{}{
	\makeatletter
	\let\oldalign\align
	\let\endoldalign\endalign
	
}}
\newcommand*\numberthis[1][]{\stepcounter{equation}\tag{\theequation}}


\allowdisplaybreaks[3]

\newcommand\splitaftercomma[1]{%
  \begingroup
  \begingroup\lccode`~=`, \lowercase{\endgroup
    \edef~{\mathchar\the\mathcode`, \penalty0 \noexpand\hspace{0pt plus .25em}}%
  }\mathcode`,="8000 #1%
  \endgroup
}



\def\mydots{\xleaders\hbox to.5em{\hfill.\hfill}\hfill}
\newlength\tmpLenNotations

\ifdraft{%
	\iflipics{
		\usepackage{lineno} 
	}{
		\usepackage[switch]{lineno} 
	}
	\linenumbers
	\overfullrule=6mm
	
	\usepackage[color,notref,notcite]{showkeys}
	\definecolor{refkey}{gray}{.99}
	\colorlet{labelkey}{green!60!black!60}
	
	\usepackage[inline,nolabel]{showlabels}

	\showlabels{cite}
	\showlabels{citealt}
	\showlabels{citealp}
	\showlabels{citet}
	\showlabels{Citet}
	\showlabels{citep}
	\showlabels{citeauthor}
	\showlabels{Citeauthor}
	\showlabels{citefullauthor}
	\showlabels{citeyear}
	\showlabels{citeyearpar}
	\showlabels{wref}
	\showlabels{wpref}
	\showlabels{wtpref}
	\showlabels{wildref}
	\showlabels{wildpageref}
	\showlabels{wildtpageref}
}{}

\iflipics{
	\ifmanuscript{\hideLIPIcs}{}
	\ifarxiv{\hideLIPIcs}{}
	\ifsubmission{}{\nolinenumbers}
}{}

\ifdraft{}{%
	\usepackage{microtype}
}

\hypersetup{
	final,
	unicode=true, 
	bookmarks=true,
	bookmarksnumbered=true,
	bookmarksdepth=2,
	bookmarksopen=true,
	breaklinks=true,
	hidelinks,
}

%
\newsavebox\tmpbox

\ifdcc{
	\renewcommand\paragraph{\@startsection{paragraph}{4}{\parindent}
	                                      {\smallskipamount}
	                                      {-1em}%
	                                      {\normalfont\normalsize\bfseries}}
}{}
\iflipics{
	\let\oldparagraph\paragraph
	\renewcommand\paragraph[1]{%
		\oldparagraph*{#1}
	}
}{
	\let\oldparagraph\paragraph
	\renewcommand\paragraph[1]{%
		\oldparagraph{\boldmath #1.}
	}
}

\ifmysiam{
	\let\oldsubsection\subsection
	\renewcommand\subsection[1]{%
		\oldsubsection{#1.}%
	}
	\let\oldsubsubsection\subsubsection
	\renewcommand\subsubsection[1]{%
		\oldsubsubsection{#1.}%
	}
}{}
\ifsiam{
	\let\oldsubsection\subsection
	\renewcommand\subsection[1]{%
		\oldsubsection{#1.}%
	}
	\let\oldsubsubsection\subsubsection
	\renewcommand\subsubsection[1]{%
		\oldsubsubsection{#1.}%
	}
}{}
\ifsiamsingle{
	\let\oldsubsection\subsection
	\renewcommand\subsection[1]{%
		\oldsubsection{#1.}%
	}
	\let\oldsubsubsection\subsubsection
	\renewcommand\subsubsection[1]{%
		\oldsubsubsection{#1.}%
	}
}{}

\let\epsilon\varepsilon

\raggedbottom

\def\myacknowledgements{}
\ifkoma{
	
}{}
\ifieee{
	
}{}
\ifsiam{
	
}{}
\ifsiamsingle{
	
}{}
\ifmysiam{
	
}{}
\ifdcc{
	
}{}
\ifacm{
	
}{}
\ifspringerjournal{
	
}{}
\iflncs{
	
}{}

%

\newcommand*\rankop{\mathsf{rank}}
\newcommand*\selop{\mathsf{select}}
\newcommand*\accessop{\mathsf{access}}






\newcommand*\DegreeEntropy{\ensuremath{H_0^{\deg}}\xspace}
\newcommand*\StringOfTargets{\ensuremath{A}\xspace}

\newcommand*\DegreeInGraph{\ensuremath{d}\xspace}
\newcommand*\InDegreeInGraph{\ensuremath{d_{\text{in}}}\xspace}
\newcommand*\OutDegreeInGraph{\ensuremath{d_{\text{out}}}\xspace}
\newcommand*\OutNeighborhood{\ensuremath{N_{\text{out}}}\xspace}

\newcommand\DAGof[1]{\ensuremath{\mathit{DAG}(#1)}\xspace}
\newcommand\UDAGof[1]{\ensuremath{S(\mathit{DAG}(#1))}\xspace}

\newcommand*\MultinomialDist{\ensuremath{\mathrm{Mult}}\xspace}
\newcommand\PrefAttch[1][M;n]{\ensuremath{\mathrm{PA}\left(#1\right)}\xspace}

\makeatother

%
%
%
%
%
%
%
\usepackage[Operation]{algorithm}
\usepackage{algorithmicx}

\usepackage[indLines=true]{algpseudocodex}
\usepackage{alphabeta} 

\newcommand\arxivpaper{extended arXiv version (\url{TODO})\xspace}

%

%
%
%
%
%
%
%
%
%
%
%
%
%
	\title{Succinct Preferential-Attachment Graphs}
%
%

%
%
%
%
%
%
%
%
%
%
%
%
%
%
%
%
%
%
%
%
%
%
%
%
%
%
%
%
%
%
%
%
%
%
%
%
%
%
%
%
%
%
%
%
%
%
%
%
%
%
%
%
%
%
%
%
%
%
%
%
%
%
%
%
%
%
%
%
%
%
%
%
%
%
%
%
%
%
%
%
%
%
%
%
%
%
%
%
%
%
%
%
%
%
%
%
%
%
%
%
%
%
%
%
%
%
%
%
%
%
%
%
%
%
%
%
%
%
%
%
%
%
%
%
%
%
%
%
%
%
%
%
%
%
%
%
%
	\newcommand\email[1]{\texttt{#1}}
	\author{%
			Ziad Ismaili Alaoui%
			\footnote{University of Liverpool, UK,
				\email{\{ziad.ismaili-alaoui, n.namrata, sebastian.wild\}\,@\,liverpool.ac.uk}}
		\and
			Namrata\footnotemark[1]%
	\and
		Sebastian Wild%
			\footnotemark[1]\,
			\footnote{University of Marburg, Germany, 
			\email{wild\,@\,informatik.uni-marburg.de}}
	}
	
	\date{\small\today}

\begin{document}

\maketitle

%

\begin{abstract}
Computing over compressed data combines the space saving of data compression 
with efficient support for queries directly on the compressed representation.
Such data structures are widely applied in text indexing and have been successfully generalised to trees.
For graphs, support for computing over compressed data remains patchy;
typical results in the area of succinct data structures 
are restricted to a specific class of graphs and use the same, 
worst-case amount of space for any graph from this class.

In this work, we design a data structure whose space usage
automatically improves with the compressibility of the graph at hand,
while efficiently supporting navigational operations (simulating adjacency-list access).
Specifically, we show that the space usage approaches the \emph{instance-optimal} space
when the graph is drawn according to the classic Barabási-Albert model of preferential-attachment graphs.
Our data-structure techniques also work for arbitrary graphs,
guaranteeing a size asymptotically no larger than an entropy-compressed edge list.
A~key technical contribution is the careful analysis of the instance-optimal space usage.
\ifproceedings{
	\par(Omitted proofs are found online in the \arxivpaper of this paper.)
}{}
\end{abstract}

\section{Introduction}

In this paper, we design a compressed representation for graphs generated according to the Barabási-Albert model of preferential attachment approaching the instance-optimal space usage of $\lg(1/p)$ bits, for $p$ the probability of the stored graph, that supports efficient operations within the same space.
(Here and throughout, $\lg = \log_2$.)

The motivation and techniques of our work span the fields of information theory, compression algorithms, and succinct data structures. 
Information theory studies random processes (\emph{sources}) for generating combinatorial objects, as a way to quantify the intrinsic information content in an object $x$. Compression methods are algorithms to efficiently construct representations of $x$ that come close to this lower bound for the size of representations.
Succinct data structures aim to support efficient queries for an object $x\in \mathcal X$ using $\lg|\mathcal X|(1+o(1))$ bits of space. This space usage is asymptotically optimal in the \emph{worst case} over $\mathcal X$, but can, in principle, be improved to $\lg(1/\Prob x)$ bits of space when the object is drawn randomly from $\mathcal X$ with probability $\Prob x$. \emph{Universal (source) codes}, as studied in information theory, approximate that space usage over a whole class of random sources ``automatically'', \ie, \emph{without knowing} the actual probabilities $\Prob x$.
In some restricted cases (namely strings~\cite{FerraginaManzini2000} and trees~\cite{MunroNicholsonSeelbachBenknerWild2021}), such universal codes have successfully been augmented with efficient query support. 
When applied to graphs, information theory, compression algorithms, and succinct data structures all are, by comparison, still in their infancy. 
A short survey of existing work on graphs from these angles is given in \wref{sec:related-work}.

A classical notion of compressibility of a text $T$ is its (zeroth-order) \emph{empirical entropy} $H_0(t)$, (formally defined in \wref{sec:preliminaries}). $H_0(T)$ gives a lower bound for the size of any representation that independently encodes individual characters of $T$, such as the well-known Huffman codes.
It is also the entropy rate of the maximum-likelihood memoryless source for $T$, or the entropy of the character distribution obtained by Bayesian inference, \ie, starting with a uniform prior distribution, 
for each $T[i]$,
update our prior (to a Beta distribution) to make $T[i]$ more likely to appear in the future.

\begin{figure}[tb]
	\centering
	\def\probsize{\smaller[1]}
	\begin{tikzpicture}[scale=.8]
		\tikzset{
			unlabeled vertex/.style={circle, draw, fill, minimum size=5pt,inner sep=0pt},
			directed edge/.style={thick,draw,{-Stealth[]}},
			undirected edge/.style={thick,draw},
			emph/.style={red},
		}
		\useasboundingbox (-.3,-1) rectangle (14.6,1.5) ;
		\begin{scope}[shift={(0,0)}]
			\foreach \v/\s in {0/,1/{emph}} {
				\node[unlabeled vertex,\s] (v\v) at (\v,0) {} ;
			}
			\foreach \f/\t/\s in {%
					1/0/{},1/0/{bend left},1/0/{bend right}%
			} {
				\ifthenelse{\equal{\f}{1}}{
					\draw[directed edge,emph] (v\f) to [\s={20*(\t-\f)-10}] (v\t) ;
				}{
					\draw[directed edge] (v\f) to [\s={20*(\t-\f)-10}] (v\t) ;
				}
			}
			\node at (0.5,1.3) {\probsize $t=1$};
			\node at (.5,-.9) {\probsize $1$};
		\end{scope}
		\begin{scope}[shift={(2.5,0)}]
			\foreach \v/\s in {0/,1/,2/{emph}} {
				\node[unlabeled vertex,\s] (v\v) at (\v,0) {} ;
			}
			\foreach \f/\t/\s in {%
					1/0/{},1/0/{bend left},1/0/{bend right},%
					2/1/{},2/0/{bend left},2/0/{bend right}%
			} {
				\ifthenelse{\equal{\f}{2}}{
					\draw[directed edge,emph] (v\f) to [\s={20*(\t-\f)-10}] (v\t) ;
				}{
					\draw[undirected edge] (v\f) to [\s={20*(\t-\f)-10}] (v\t) ;
				}
			}
			\node at (1,1.3) {\probsize $t=2$};
			\node at (1,-.9) {\probsize $3\cdot \frac{3}{6}\cdot\frac{3}{6}\cdot \frac{3}{6}$};
		\end{scope}
		\begin{scope}[shift={(6,0)}]
			\foreach \v/\s in {0/,1/,2/,3/{emph}} {
				\node[unlabeled vertex,\s] (v\v) at (\v,0) {} ;
			}
			\foreach \f/\t/\s in {%
					1/0/{},1/0/{bend left},1/0/{bend right},%
					2/1/{},2/0/{bend left},2/0/{bend right},%
					3/1/{bend left},3/0/{bend left},3/1/{bend right}%
			} {
				\ifthenelse{\equal{\f}{3}}{
					\draw[directed edge,emph] (v\f) to [\s={20*(\t-\f)-10}] (v\t) ;
				}{
					\draw[undirected edge] (v\f) to [\s={20*(\t-\f)-10}] (v\t) ;
				}
			}
			\node at (1.5,1.3) {\probsize $t=3$};
			\node at (1.5,-.9) {\probsize $3\cdot \frac{5}{12}\cdot\frac{4}{12}\cdot \frac{4}{12}$};
		\end{scope}
		\begin{scope}[shift={(10.5,0)}]
			\foreach \v/\s in {0/,1/,2/,3/,4/{emph}} {
				\node[unlabeled vertex,\s] (v\v) at (\v,0) {} ;
			}
			\foreach \f/\t/\s in {%
					1/0/{},1/0/{bend left},1/0/{bend right},%
					2/1/{},2/0/{bend left},2/0/{bend right},%
					3/1/{bend left},3/0/{bend left},3/1/{bend right},%
					4/1/{bend left},4/0/{bend right},4/3/{}%
			} {
				\ifthenelse{\equal{\f}{4}}{
					\draw[directed edge,emph] (v\f) to [\s={20*(\t-\f)-10}] (v\t) ;
				}{
					\draw[undirected edge] (v\f) to [\s={20*(\t-\f)-10}] (v\t) ;
				}
			}
			\node at (2,1.3) {\probsize $t=4$};
			\node at (1.95,-.9) {\probsize $6\cdot \frac{6}{18}\cdot\frac{6}{18}\cdot \frac{3}{18}$};
		\end{scope}
	\end{tikzpicture}
	\caption{%
		The Barabási-Albert \PrefAttch model of iteratively growing graphs with $M=3$
		for $n=4$ time steps. The probability of the shown choices of $M$ targets are given for each time step, 
		yielding $\Prob{G_4} = 5/864$ overall. The information content is thus $\lg(1/\Prob{G_4}) \approx 7.43$ bits; the degree-entropy is $\DegreeEntropy(G_4)\approx 15.37$.
	}
	\label{fig:barabasi-albert}
\end{figure}
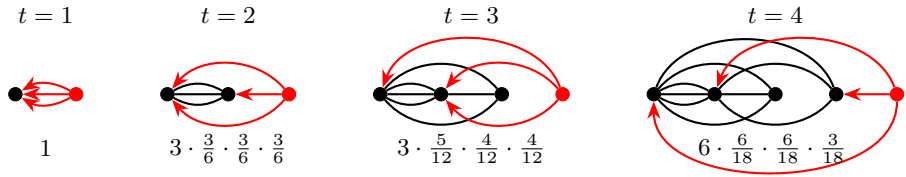

Several competing notions of empirical entropies for trees have been proposed~\cite{HuckeLohreySeelbachBenkner2020,HuckeLohreySeelbachBenkner2019,KiefferYangSzpankowski2009} (see also \cite[Part\,II]{SeelbachBenkner2023}); for graphs, the grounding in information theory is weaker still.
A simple and intuitive notion of empirical entropy for (directed) graphs $G$ can be obtained as follows:
Concatenate the adjacency lists (the out-neighbours) of all vertices and consider the empirical entropy $H_0(\StringOfTargets)$ of the resulting \emph{adjacency string} $\StringOfTargets = \StringOfTargets(G)$ over the alphabet $V(G)$. 
Observe that each vertex $v\in V(G)$ appears exactly $\InDegreeInGraph(v)$ times (for $\InDegreeInGraph(v)$ the in-degree of $v$) in $\StringOfTargets(G)$; hence we call this quantity $\DegreeEntropy(G)$, the \emph{degree entropy} of $G$.
For ease of presentation, we focus on graphs where the out-degree of all vertices is the same and a canonical order of the vertices is given; then it is easy to reconstruct $G$ from $\StringOfTargets(G)$.
Note that $\DegreeEntropy(G) = H_0(\StringOfTargets(G)) \le m \lg n$, where the latter is the number of bits used by an uncompressed adjacency list.
(Here $n = |V(G)|$ is the number of vertices and $m=|E(G)|$ the number of edges.)

We will show that despite its simplicity, $\DegreeEntropy(G)$ can be a meaningful benchmark for graph compression.
We point out here that this is at least partially surprising since $\StringOfTargets(G)$ fixes a specific ordering (and hence, naming) of vertices and edges.
We (almost always) consider the graph unchanged when (1) the edges are listed in different order, and, in many applications, (2) when arbitrary renaming of vertices occurs.
In the latter case, we only seek to preserve the \emph{structure} of a graph (\ie, the \emph{unlabelled graph} or the equivalence class under relabelling), \eg, to encode labels separately.
In addition, \emph{undirected} edges are given a direction by listing them in $\StringOfTargets(G)$.
So, in general $\StringOfTargets^{-1}(G)$ is far from unique and $\DegreeEntropy(G)$ is \emph{not} in general a lower bound for the description length for $G$.
This complication is indicative of what makes graph compression harder than text compression.

As a challenging testbed for compressed graph data structures, we hence consider \emph{unlabelled}, \emph{undirected} graphs from a highly non-uniform distribution. 
Specifically, we assume the \emph{Barabási-Albert model}~\cite{BarabasiAlbert1999} of random \emph{preferential-attachment graphs}.
It generates a graph iteratively as follows (\wref{fig:barabasi-albert}). At each time step $t$, we add one new vertex $v_t$
and draw $M$ random neighbours for $v_t$, where $M$ is a fixed parameter of the model.
The $M$ neighbours $a_{t,1},\ldots,a_{t,M}$ are chosen, independently and with repetitions allowed, from the existing vertices $v_0,\ldots,v_{t-1}$ with probability \emph{proportional} to the (current) \emph{degree} of vertex $v_i$, $\Prob{u_{t,j} = v_i} \propto d_t(v_i)$. 
This rule follows the \emph{Matthew principle} (rich vertices get richer) and produces graphs where the degree-distribution follows a power law (asymptotically, the probability of a vertex to collect total degree $k$ is proportional to $k^{-3}$).
The heavy-tailed power-law degree distribution is widely observed in application domains with (approximately) scale-free graphs and makes compression interesting and non-trivial.
The Barabási-Albert model is also a natural analogue of Bayesian inference of character distributions:
whenever we see a vertex as the target of an edge, we consider this target as more likely for future choices.

\subsection{Our Results}

Suppose $G$ is a graph generated by the Barabási-Albert model, and let $\Prob{G}$ be the probability
for $G$ to arise in this process (where we assume that $M$ is a fixed parameter and we stop the generation process after the $n$th vertex).
Any encoding for graphs that is simultaneously optimal for all such graphs must use (close to) $\lg(1/\Prob{G})$ bits of space to encode $G$.
Our first set of results relate the amount of information in $G$, $\lg(1/\Prob{G})$, to the empirical degree-entropy of $G$, $\DegreeEntropy(G)$: despite their seemingly unrelated origins, the two quantities \emph{coincide} up to an error term.

\begin{theorem}[Instance-specific lower bound]
\label{thm:lower-bound-lbl}
	Let $G$ be a labelled graph generated by preferential attachment, $G\sim\PrefAttch$,
	and further assume that apart from a fixed-size seed graph, the graph~$G$ is simple.
	Then
	\[
		\lg(1/\Prob{G}) \wwrel= \DegreeEntropy(G) \bin\pm O(n M \lg M).
	\]
\end{theorem}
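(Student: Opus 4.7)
The plan is to expand $\lg(1/\Prob{G})$ directly from the generating process and match it term by term against $\DegreeEntropy(G) = Mn\lg(Mn) - \sum_v d_v\lg d_v$, where $d_v \coloneqq \InDegreeInGraph(v)$. At step $t$, the new vertex $v_t$ draws a multiset of $M$ targets; writing $\tau_{t,i}$ for the multiplicity of $v_i$ in that multiset and $d_t(v_i)$ for its degree at the start of step $t$, the model yields
\[
	\lg(1/\Prob{G}) \wwrel= -\sum_t \lg\binom{M}{\tau_{t,0},\ldots,\tau_{t,t-1}} \bin+ M\sum_t \lg D_t \bin- \sum_t\sum_i \tau_{t,i}\lg d_t(v_i),
\]
where $D_t = 2M(t-1) + O(1)$ is the total degree before step $t$ (each step adds $M$ edges). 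The bounded-size seed together with the constantly many initial steps in which parallel edges may occur contributes only an additive $O(M\lg M)$ to $\lg(1/\Prob{G})$ and can be absorbed into the error term.

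Next I would exploit the simplicity assumption in two ways. First, for $t$ past the seed every $\tau_{t,i}\in\{0,1\}$, so each multinomial coefficient equals $M!$; Stirling gives $-n\lg M! = -nM\lg M + O(nM)$, and combined with $M\sum_t \lg D_t = Mn\lg(Mn) + O(nM)$ (Stirling on $\lg((n-1)!)$) the first two sums collapse to $Mn\lg n + O(nM)$. Second, and crucially, whenever vertex $v$ receives its $j$-th incoming edge, its degree at that moment equals exactly $M+(j-1)$: the $M$ out-edges from its birth plus the $j-1$ previously received in-edges, with no changes in between. This yields the key identity
\[
	\sum_t\sum_i \tau_{t,i}\lg d_t(v_i) \wwrel= \sum_v\sum_{j=1}^{d_v}\lg(M+j-1) \wwrel= \sum_v\lg\frac{(M-1+d_v)!}{(M-1)!}.
\]

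Assembling the pieces gives $\lg(1/\Prob{G}) = Mn\lg n - \sum_v\lg\tfrac{(M-1+d_v)!}{(M-1)!} + O(nM)$, and comparing with $\DegreeEntropy(G) = Mn\lg n + Mn\lg M - \sum_v d_v\lg d_v$ reduces the theorem to the estimate
\[
	\sum_v\lg\frac{(M-1+d_v)!}{(M-1)!} \bin- \sum_v d_v\lg d_v \wwrel= \bin\pm O(nM\lg M).
\]
This last step is the main technical obstacle. Applying Stirling termwise rewrites the left-hand side as $\sum_v[(M+d_v)\lg(M+d_v) - d_v\lg d_v] - nM\lg M + O(nM)$; I would split $(M+d_v)\lg(M+d_v) - d_v\lg d_v = M\lg(M+d_v) + d_v\lg(1+M/d_v)$ and bound each piece. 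The first is $nM\lg M + O(nM)$ via $\lg(1+d_v/M)\le d_v/(M\ln 2)$ together with $\sum_v d_v = Mn$. The second is $O(nM\lg M)$ after splitting vertices according to whether $d_v\ge M$ (where $d_v\lg(1+M/d_v)\le M/\ln 2$, contributing $O(nM)$ overall) or $d_v<M$ (where $d_v\lg(1+M/d_v)\le d_v\lg(2M)$, summed via $\sum d_v = Mn$ to $O(nM\lg M)$). Combining these bounds yields the required estimate and completes the proof.
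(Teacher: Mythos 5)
Your proposal is correct, and its skeleton coincides with the paper's: expand $\lg(1/\Prob{G})$ over the attachment steps, treat the multinomial coefficients separately, and swap the order of summation using the key observation that when a vertex receives its $j$th incoming edge its degree is exactly $M+j-1$, so the degree sum telescopes into $\sum_v \lg\bigl((M-1+d_v)!/(M-1)!\bigr)$. The two arguments diverge only in the final comparison with $\DegreeEntropy(G)$. The paper gets there by Gibbs' inequality (to pass from the empirical degree distribution to the in-degree distribution underlying $\DegreeEntropy$) together with the log-sum inequality, and its displayed chain certifies only the lower-bound direction; you instead expand everything via Stirling and control the discrepancy $\sum_v\bigl[(M+d_v)\lg(M+d_v)-d_v\lg d_v\bigr]$ by an elementary split into $M\lg(M+d_v)$ and $d_v\lg(1+M/d_v)$ with a case distinction on $d_v\gtrless M$, using only $\sum_v d_v = Mn$. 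Your route is more pedestrian but delivers the two-sided $\pm O(nM\lg M)$ estimate of the theorem statement directly, helped by your observation that under the simplicity assumption each multinomial coefficient equals $M!$ exactly (the paper only uses the one-sided bound $\binom{M}{c_0,\ldots,c_{t-1}}\le M!$); the paper's route is shorter and showcases the Gibbs'-inequality comparison of entropies that the authors flag as being of independent interest. Both are valid proofs of the claim.
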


Note that $O(n M \lg M)$ is a lower order term for typical graphs 
when $M$ is not too big, that is $\lg M = o(\lg n)$.
The analysis (\wref{sec:lower-bound}) uses Gibbs' inequality to compare different notions of empirical entropy, a trick that might be of independent interest.

For \emph{unlabelled} graphs $G$, the bound reduces by $n\lg n$ bits:

\begin{theorem}[Unlabelled lower bound]
\label{thm:lower-bound-ulbl}
	Let $G$ be as in \wref{thm:lower-bound-lbl} and consider its structure $S = S(G)$,
	\ie, a random \emph{unlabelled} graph generated by preferential attachment.
	Then
	\[
		\lg(1/\Prob{S}) \wwrel\ge \DegreeEntropy(S) - n \lg n \bin- O(n M \lg M).
	\]
\end{theorem}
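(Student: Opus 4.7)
The plan is to reduce the unlabelled statement to the labelled bound from \wref{thm:lower-bound-lbl} by paying at most $\lg(n!) \le n\lg n$ extra bits for the choice of vertex labelling. Concretely, I would write
\[
	\Prob{S} \wwrel= \sum_{G \,:\, S(G) = S} \Prob{G},
\]
where the sum ranges over all labelled preferential-attachment graphs whose underlying unlabelled structure equals $S$. Any two such labelled graphs differ only by a permutation of the $n$ vertex labels, so the sum has at most $n!$ summands. Letting $G^\star$ be a maximiser of $\Prob{G}$ over this set yields $\Prob{S} \le n! \cdot \Prob{G^\star}$, and hence
\[
	\lg(1/\Prob{S}) \wwrel\ge \lg(1/\Prob{G^\star}) - \lg(n!) \wwrel\ge \lg(1/\Prob{G^\star}) - n\lg n .
\]

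The second ingredient is that the degree entropy depends only on the unlabelled structure. Indeed, $\DegreeEntropy(G) = H_0(\StringOfTargets(G))$ is a function of the multiset of character frequencies in $\StringOfTargets(G)$: each vertex $v$ contributes $\InDegreeInGraph(v)$ occurrences. Relabelling vertices merely permutes the alphabet (and reorders the concatenated adjacency lists), leaving this multiset of frequencies---and therefore $H_0$---invariant. Consequently $\DegreeEntropy(G^\star) = \DegreeEntropy(S)$.

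Since $G^\star$ lies in the support of $\PrefAttch$ and inherits the ``simple beyond the seed graph'' property from any $G$ with $S(G)=S$, \wref{thm:lower-bound-lbl} applies and gives
\[
	\lg(1/\Prob{G^\star}) \wwrel\ge \DegreeEntropy(G^\star) - O(nM\lg M) \wwrel= \DegreeEntropy(S) - O(nM\lg M) .
\]
Combining with the inequality above yields the desired bound $\lg(1/\Prob{S}) \ge \DegreeEntropy(S) - n\lg n - O(nM\lg M)$.

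The genuine technical work is already packaged inside \wref{thm:lower-bound-lbl}; the remaining steps are elementary. The only minor subtlety to be careful about is the crude count on preimages of $S$: phenomena such as the distinguished seed vertex $v_0$ or symmetries among parallel edges created in a single time step can only \emph{reduce} the number of labelled graphs mapping to $S$, so the bound $n!$---and hence the $n\lg n$ term---is always valid. Tightening it to $\lg(n!/|\mathrm{Aut}(S)|)$ would improve constants but is not needed at this precision.
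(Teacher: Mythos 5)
Your proof is correct and follows the same overall reduction as the paper: charge at most $\lg(n!)\le n\lg n$ bits for the choice of labelling, note that $\DegreeEntropy$ is invariant under relabelling, and invoke \wref{thm:lower-bound-lbl} on a labelled representative. The one point where you genuinely diverge is how $\Prob{S}$ is related to the probability of a single labelled graph. The paper invokes the equiprobability result of \L{}uczak et al.\ (\wref{lem:same-prob}) to obtain the exact identity $\Prob{S}=\Prob{G}\cdot|\mathrm{Adm}(S)|$ for the generated $G$, and then bounds $|\mathrm{Adm}(S)|\le n!$; you instead bound the sum $\sum_{G':\,S(G')=S}\Prob{G'}$ by $n!$ times its largest term $\Prob{G^\star}$. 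Your route is slightly more self-contained, since it needs no equiprobability lemma, and it reaches the same bound; what it gives up is the exact identity, which the paper uses to observe that the true lower bound is $\lg(1/\Prob{G})-\lg|\mathrm{Adm}(S)|$ --- tight but of opaque growth in $n$. (Both arguments share the harmless imprecision of writing $n!$ for a vertex set of size $n+1$; this is absorbed in the error term, and in any case the admissible relabellings number at most $n!$ because the seed vertex is distinguishable.)
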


(Note that $\DegreeEntropy(G) = \DegreeEntropy(S(G))$.)
It may seem obvious that the bound would reduce precisely by the $\lg(n!)\sim n\lg n$ bits needed to store $n$ vertex labels
since any unlabelled graph can correspond to at most $n!$ labelled graphs;
however, we point out that 
(1) for general distributions over labelled graphs, it is not true that these $n!$ labelled graphs are equally likely, and
(2) a strictly higher lower bound is the actual truth for certain $S$, \eg, graphs with linear diameter.
We obtain a tight bound
\ifsubmission{%
    in \wref{app:omitted-proofs},
}{%
    in \wref{sec:lower-bound},
}%
but its general growth with $n$ is opaque.
We do not know whether the bound in \wref{thm:lower-bound-ulbl} is tight.
\ifproceedings{%
	Due to space constraints, some proofs are deferred to the extended arXiv version of this paper.
}{}

The degree entropy thus precisely captures the asymptotic information content of a pref\-e\-ren\-tial-attachment graph.
We next show that we can represent $G$ using close to $\DegreeEntropy(G)$ bits of space 
and support efficient queries on the compressed representation.

\begin{theorem}[Ultrasuccinct preferential-attachment graphs]
\label{thm:data-structure}
	Let $G$ be obtained from $\PrefAttch$.
	There are data structures that support operations
        finding 
        (1) the $i$th in-neighbour, 
        (2) the $i$th out-neighbour, 
        (3) the degree of a vertex, and 
        (4) testing adjacency,
        all in $O(\lg n)$ time,
    in the following total space:
	\begin{thmenumerate}{thm:data-structure}
	\item \label{thm:data-structure-labeled} 
		Storing the labelled graph $G$ uses $\DegreeEntropy(G) + o(Mn)$ bits of space.
	\item \label{thm:data-structure-unlabeled} 
		Storing the unlabelled graph $S=S(G)$ uses at most $\DegreeEntropy(S)(1-\frac1M) + 2n + o(Mn) \le (M-1)n\lg n + 2n + o(Mn)$ bits of space.
	\end{thmenumerate}
\end{theorem}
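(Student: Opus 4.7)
The plan is to reduce both parts to compressed-string queries on the adjacency string $A=\StringOfTargets(G)$, a sequence of length $N=Mn$ over alphabet $[n]$ for which $H_0(A)=\DegreeEntropy(G)$ holds by definition.

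For part~(a), we store $A$ using a compressed-string representation that attains $H_0(A)+o(N)$ bits and supports $\accessop$, $\rankop_v$, and $\selop_v$ in $O(\log n)$ time per query (as provided, for example, by a compressed wavelet tree combined with an alphabet-partitioning technique in the style of Belazzougui--Navarro). The four graph operations reduce to a constant number of string primitives: the $i$th out-neighbour of $v$ is $\accessop(A,Mv+i)$; the $i$th in-neighbour of $v$ is $\lfloor\selop_v(A,i)/M\rfloor$; the in-degree of $v$ is $\rankop_v(A,N)$, while the out-degree equals $M$; and $u$ is adjacent to $v$ iff $\rankop_v(A,M(u{+}1))>\rankop_v(A,Mu)$. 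Each thus runs in $O(\log n)$ time, proving \wref{thm:data-structure-labeled}.

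For part~(b), we adopt on $S$ the canonical labelling obtained by numbering each vertex by the order in which it first occurs as a target when $A$ is scanned left-to-right. Under this canonicalisation, the first appearance of each vertex in $A$ is \emph{forced} (it must equal the next unused integer) and so carries no information. Decompose $A$ into a bit vector $B\in\{0,1\}^N$ marking the first-occurrence positions (at most $n$ ones) and the subsequence $A''$ of the remaining $\ge(M{-}1)n$ characters. Store $B$ as a sparse bit-vector in $O(n\log M)$ bits with $O(1)$ rank/select, and store $A''$ with the same compressed-string structure as in part~(a). Queries on $A$ translate to queries on $A''$ with one additional rank/select on $B$, preserving $O(\log n)$ time. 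It remains to bound $H_0(A'')$: since exactly one occurrence of every vertex is removed, the per-symbol counts drop from $c_v$ to $c_v-1$; combining the trivial bound $H_0(A)\le N\log n$ with the convexity of $x\mapsto x\log(1/x)$ yields $H_0(A'')\le H_0(A)(1-1/M)+O(n)$. Absorbing constants into the $2n+o(Mn)$ slack, and using $\DegreeEntropy(S)(1-1/M)\le(M-1)n\log n$ for the weaker bound, completes the claim.

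The main technical obstacle is the entropy bound for the unlabelled case: the clean factor $(1-1/M)$ (rather than a mere $O(n\log n)$ absolute saving) must be extracted from the convexity argument together with the loose inequality $H_0(A)\le N\log n$. A secondary challenge is realising an $o(Mn)$ redundancy, as opposed to the more common $o(Mn\log n)$ given by a plain wavelet tree, for the compressed-string structure on an alphabet of size $n$ and length $Mn$; this dictates the use of an alphabet-partitioning scheme.
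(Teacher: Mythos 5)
Your part~(a) is essentially the paper's proof of \wref{thm:data-structure-labeled}: put $A(G)$ into an entropy-compressed wavelet tree and translate the four operations into a constant number of $\accessop/\rankop/\selop$ calls in $O(\lg n)$ time each. That part is correct and matches the paper.

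Part~(b) takes a different route and has a genuine quantitative gap. You delete one occurrence of each \emph{distinct} vertex from $A$ (its first occurrence), so you remove only $n'$ symbols, where $n'$ is the number of vertices with positive in-degree. The paper instead removes exactly one symbol from each of the $n$ \emph{blocks} (one out-edge per source vertex), storing the removed edges as a succinct ordinal tree in $2n+o(n)$ bits. The difference is fatal here: in $\PrefAttch$ a constant fraction of vertices has in-degree zero, so $n-n'=\Theta(n)$. Even granting that your deletion never increases the per-character entropy (which is true, by majorisation: the reduced frequency vector $\bigl((c_v-1)/(Mn-n')\bigr)_v$ majorises $\bigl(c_v/(Mn)\bigr)_v$), the best available bound is $H_0(A'')\le H_0^{\mathrm{pc}}(A)\cdot(Mn-n')=\DegreeEntropy(S)-n'H_0^{\mathrm{pc}}(A)$, which exceeds the target $\DegreeEntropy(S)(1-\frac1M)=\DegreeEntropy(S)-nH_0^{\mathrm{pc}}(A)$ by $(n-n')\,H_0^{\mathrm{pc}}(A)=\Theta(n\lg n)$ on typical instances, where $H_0^{\mathrm{pc}}(A)=\Theta(\lg n)$. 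This is not absorbed by $2n+o(Mn)$, so the claimed bound is not established; the sentence ``combining the trivial bound with convexity yields $H_0(A'')\le H_0(A)(1-1/M)+O(n)$'' is exactly where this is swept under the rug. Two further problems: the canonical labelling by ``first occurrence as a target'' is circular (the left-to-right scan order of $A$ is the block order of the sources, i.e.\ the very labelling being defined, and is not determined by $S$), and in-degree-zero vertices receive no label at all; the paper resolves both by using preorder numbers in the tree $T$. Finally, note that the paper's per-block deletion is \emph{not} automatically entropy-safe --- a bad choice of one symbol per block can increase $H_0^{\mathrm{pc}}$ --- which is why it requires the dedicated least-frequent-character analysis of \wref{lem:conjecture}; your scheme sidesteps that difficulty, but only by deleting too few symbols.
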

By \wref{thm:lower-bound-lbl}, the space for (a) is asymptotically optimal.
The space for (b) almost matches \wref{thm:lower-bound-ulbl}; in particular, it does so on average:
$\smash{\frac1M} \DegreeEntropy(S)$ is close to $n \lg n$ unless $S$ is very compressible.
(Note that $A(S)$ has $Mn$ characters, so $\DegreeEntropy(S) = Mn \cdot H_0^{\text{pc}}(A(S))$
and $\frac1M \DegreeEntropy(S) = n\cdot H_0^{\text{pc}}(A(S))$, for $H_0^{\text{pc}}$ the \emph{per-character entropy}; $H_0^{\text{pc}}(A(S))$ is always between $0$ and $\lg n$.)

The labelled graph data structure uses an entropy-compressed wavelet tree to represent the string $A(G)$.
To reduce the space for an unlabelled graph, our data structure uses one outgoing edge per vertex (except $v_0$) as a \emph{parent edge} in an ordinal tree. These can be stored using a succinct tree data structure using only 2 bits per edge 
if we use the tree indices to identify the graph vertices. 
The remaining out-neighbours form string $A'$, which replaces $A$ and is now $n$ entries shorter, saving up to $\lg n$ bits each.
By judiciously choosing which edges to turn into tree edges, we can moreover \emph{retain} the \emph{compressibility} of $A'$;
we show that our choice implies that the per-character entropy in $A'$ is never larger than that of $A$%
\ifproceedings{ (Appendix~C of the extended arXiv version).}{.}

The unlabelled-graph data structure assigns new labels to the vertices of the graph, which have to be used in queries; 
the relabelling function can be provided during construction, but cannot be stored as part of the data structure in the stated space.
A use case for the unlabelled-graph data structure is any task where we only need to identify a small subset of vertices; \eg, computing network-analysis metrics such as betweenness centrality for a subset of important vertices.
We then store the names of important vertices in addition to the data structure for $S(G)$.

\subsection{Related Work}
\label{sec:related-work}

The closest work to ours is the analysis of \PrefAttch by \L uczak et al.~\cite{LuczakMagnerSzpankowski2019,LuczakMagnerSzpankowski2019isit}.
They compute the entropy of the distribution $G\sim\PrefAttch$, $\E{\lg(1/\Prob{G})} = (M-1) n\lg n \cdot (1+o(1))$  and (only in the conference version~\cite{LuczakMagnerSzpankowski2019isit}) describe a compression algorithm achieving this entropy in expectation up to a redundancy of $O(n \lg\lg n)$.
Our work improves their compression method to redundancy $O(n)$ (using a succinct tree instead of their backtracking numbers), provides instance-specific bounds, and supports efficient operations on the compressed representation.  (It is not clear whether their compression algorithm can be turned into an efficient compressed data structure.)

Many more specialized models of random graphs have been suggested and studied.
An example that also features a compressed data structure is the geometric inhomogeneous random graph (GIRG) model~\cite{BringmannKeuschLengler2019}.
The expected compressed space matches the entropy up to constant factors.  
It is not clear if instance-optimal space is easily achievable. 
As stated, the compression method requires a geometric realization of the graph as input.

Another related area is applied graph compression, where methods are evaluated empirically on benchmarks from specific domains.  As a representative example, we mention the webgraph framework~\cite{BoldiVigna2004}, which provides space-efficient data structures tuned to large web graphs.

The idea to use wavelet trees for the edges of a graph is discussed in Navarro's survey on wavelet trees~\cite{navarro2014wavelet} (and may be folklore). This approach alone is inherently labelled and thus cannot escape the lower bound of \wref{thm:lower-bound-lbl}.
The closest prior work to our data structure is the GLOUDS graph representation of Fischer and Peters~\cite{FischerPeters2016}.
GLOUDS partitions the edges into tree edges and non-tree edges using a standard breadth-first search to obtain an ordinal tree; it uses a custom representation for the tree, as well.
GLOUDS does not achieve optimal space for $G\sim \PrefAttch$ with constant $M$. 
In its original form it also does not adapt to $\DegreeEntropy(G)$ (although entropy compressing their array $H$ would move towards the latter).

There are many further works that are related in spirit to our work, but with a distinctly different angle.
For certain specific classes of graphs, \emph{succinct} data structures have been designed:
interval graphs \cite{AcanChakrabortyJoRao2020,HeMunroNekrichWildWu2020},
chordal graphs~\cite{MunroWu2018},
permutation graphs~\cite{TsakalidisWildZamaraev2023},
circle graphs~\cite{AcanChakrabortyJoNakashimaSadakaneRao2022},
bounded clique-width graphs~\cite{ChakrabortyJoSadakaneRao2024},
series-parallel graphs~\cite{ChakrabortyJoSadakaneRao2023}.
For representing any graph from a class of graphs with $g_n$ graphs on vertex set $[n]$, these data structures use $\lg(g_n)(1+o(1))$ bits of space for representing one such graph;
this invariably corresponds to a uniform distribution over the class of graphs under consideration.
Supported operations vary, but always include finding neighbours of a given vertex.

For trees, data structures beating the worst-case optimal space on compressible inputs have been considered.
Ultrasuccinct trees~\cite{JanssonSadakaneSung2012} adapt to the degree entropy of a tree. 
Hypersuccinct trees~\cite{MunroNicholsonSeelbachBenknerWild2021} have been shown to simultaneously yield optimal space for a large variety of entropy measures, both empirical (such as degree entropy) and entropy rates of random tree sources.
Here we lift these results to ``ultrasuccinct graphs''.

On the information theory of structures (unlabelled graphs), a line of work studied the entropy of random unlabelled graphs.
Apart from the works of \L uczak et al.~\cite{LuczakMagnerSzpankowski2019,LuczakMagnerSzpankowski2019isit}
on preferential attachment graphs, small-world graphs~\cite{KontoyiannisLimPapakonstantinopoulouSzpankowski2022}, the Erdős-Renyi graphs~\cite{ChoiSzpankowski2012}, and a vertex-copying model~\cite{TurowskiMagnerSzpankowski2020} have been studied.

Lossless compression of graphs is surveyed by Besta and Hoefler~\cite{BestaHoefler2019};
see also~\cite{BouritsasLoukasKaraliasBronstein2021}.
Typical examples consider an application and exploit specifics of graphs arising there. 
A statistical model of data and efficient queries are often not available.

\section{Preliminaries}
\label{sec:preliminaries}

In this section, we collect some definitions and known results that we build on in this work.
We write $[a..b]$ for $\{a,a+1,\ldots, b\}$ and abbreviate $[n] = [1..n]$; also $[a..b) = [a..b-1]$.
We use standard graph terminology; specifically, for a graph $G$ and a vertex $v\in V(G)$, we write $\DegreeInGraph(v) = \DegreeInGraph_G(v)$ for the degree of $v$ in $G$, \ie, the number of (potentially parallel) edges incident at $v$.
For directed graphs (digraphs), we distinguish in-degree and out-degree: $\DegreeInGraph(v) = \InDegreeInGraph(v) + \OutDegreeInGraph(v)$; we also write $\OutNeighborhood(v)$ for the out-neighbourhood of $v$, \ie, the (multi)set of vertices $w$ for which there is a (bundle of) edge(s) $vw$ in $E(G)$. 
We call a DAG \emph{$M$-out-regular} if $\OutDegreeInGraph(v) = M$ for all vertices $v\in V\setminus\{v_0\}$.

\subsection{Empirical Entropy}

For a text $T[1..n]$ over alphabet $\Sigma = [1..\sigma]$, the \emph{zeroth-order empirical entropy} $H_0(T)$ is given by
\ifsubmission{%
\(
		H_0(T)
	\wwrel=
		\sum_{c=1}^{\sigma} |T|_c \lg\bigl( \frac{n}{|T|_c} \bigr),
\)
}{%
\[
		H_0(T)
	\wwrel=
		\sum_{c=1}^{\sigma} |T|_c \lg\left( \frac{n}{|T|_c} \right),
\]
}%
where $|T|_c = |\{i \in [1..n] : T[i] =c\}|$ denotes the number of occurrences of $c$ in~$T$. Additionally, we define $H_0^{\mathrm{pc}}(T)=H_0(T)/|T|$ to be the \emph{per-character zeroth-order empirical entropy} of $T$. As is standard, we set $0 \lg (n/0) := 0$ for notational convenience.

\subsection{Preferential-Attachment Graphs}

We consider the classical Barabási-Albert model of generating a random (undirected) graph (with parallel edges); see \wref{fig:barabasi-albert}.
Given a target size $n$ and a parameter $M\in\N$, we grow a graph $G$ iteratively,
where vertex $v_t$ arrives at time $t$, for $t=0,\ldots,n$.
We denote by $G_t$ the graph after this arrival at time $t$; the vertex set is $V(G_t) = \{v_0,\ldots,v_{t}\}$.
$G_0$ is the single isolated vertex $v_0$; $G_1$ has $M$ parallel edges between $v_0$ and $v_1$.
$G_t$ for $t\ge 2$ results from $G_{t-1}$ by adding $v_t$ and $M$ edges from $v_t$ to targets $a_{t,1},\ldots,a_{t,M} \in V(G_{t-1}) = \{v_0,\ldots,v_{t-1}\}$, where the
$a_{t,1},\ldots,a_{t,M}$ are mutually independent and identically distributed with 
\[
		\Prob{a_{t,1} = v_\ell} 
	\wrel=\cdots\wrel= 
		\Prob{a_{t,M} = v_\ell} 
	\wwrel= 
		\frac{\DegreeInGraph_{G_{t-1}}(v_\ell)}{2|E(G_{t-1})|}
		\qquad(\ell=0,\ldots,t-1).
\]
Note that if we orient the edges from new to old vertices in $G_n\sim\PrefAttch$ (in the order they arrived to the graph, \ie, from large to small $t$), we obtain a directed acyclic graph with uniform out-degrees: ${\OutDegreeInGraph}(v_t) = M$ for $t\in[1..n]$ ($\OutDegreeInGraph(v_0)=0$). 

We will abbreviate $\DegreeInGraph_t(v) = \DegreeInGraph_{G_{t}}(v)$.
Note that $|E(G_{t})| = tM$ and $|V(G_t)| = t+1$.
We write $G_n\sim\PrefAttch$ to indicate that $G_n$ has been randomly chosen according to this preferential-attachment process.

For $G \sim \PrefAttch$, we obtain $\DegreeEntropy(G)$ as 
$\DegreeEntropy(G) = H_0(A(G))$ where 
\[A(G) = [a_{1,1},\ldots,a_{1,M},\,\ldots,\, a_{n,1},\ldots,a_{n,M}].\]

Let $S(G)$ denote the \emph{structure} of $G$, \ie, the class of graphs which are isomorphic to~$G$. 
We also refer to these as \emph{unlabelled graphs}.
We define the unlabelled graph distribution $\mathit{PA}^u(M;n)$ as the probability distribution on the family of~$S(G)$, where the probability of each class is the sum of the probabilities that a labelled version of $S(G)$ is $\PrefAttch$, that is,
\ifsubmission{\(}{\[}
\sum_{H\in S(G)} \Prob{H = \PrefAttch}.
\ifsubmission{\)}{\]}

\subsection{Succinct Data Structures}

For the reader's convenience, we collect used results
on succinct data structures here.
First, we cite the compressed bit vectors of Pătrașcu~\cite{Patrascu2008}.

\begin{lemma}[Compressed bit vector]
\label{lem:compressed-bit-vectors}
	Let $B[1..n]$ be a bit vector of length~$n$, containing $m$ $1$-bits.
	For any constant $c>0$, there is a data structure using
	\(
			\lg \binom{n}{m} \wbin+ O\bigl(\frac{n}{\lg^c n}\bigr)
		\wwrel\le 
			m \lg \bigl(\frac nm\bigr) \wbin+ O\bigl(\frac{n}{\lg^c n}+m\bigr)
	\)
	bits of space that
	supports in $O(1)$ time operations 
	(for $i \in [1..n]$):
	\begin{itemize}
		\item $\accessop(B, i)$: return $B[i]$, the bit at index $i$ in $B$;
		\item $\rankop_\alpha(B, i)$: return the number of bits with
		value $\alpha \in \{0,1\}$ in $B[1..i]$;
		\item $\selop_\alpha(B, i)$: return the index of the $i$th
		bit with value $\alpha \in \{0,1\}$.
	\end{itemize}
\end{lemma}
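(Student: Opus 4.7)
The plan is to combine enumerative (``class--offset'') coding within small blocks with a hierarchical rank/select index, following Pătrașcu's succincter construction. First I would partition $B$ into blocks of length $b = \tfrac12 \lg n$ and represent each block as a pair: the \emph{class} is the popcount $k_i \in [0..b]$, stored in $\lceil \lg(b{+}1) \rceil$ bits, and the \emph{offset} identifies the block among the $\binom{b}{k_i}$ length-$b$ bitstrings with $k_i$ ones, stored in $\lceil \lg\binom{b}{k_i} \rceil$ bits. Because $b \le \tfrac12\lg n$, a one-time precomputed table of size $O(\sqrt n\,\mathrm{polylog}\, n)$ can translate any $(k,\text{offset})$ pair into its raw bit pattern and answer any intra-block $\accessop$, $\rankop$, or $\selop$ query in constant time.

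The space bound rests on the identity $\binom{n}{m} = \sum_{k_1+\cdots+k_{n/b}=m} \prod_i \binom{b}{k_i}$, which gives $\sum_i \lg\binom{b}{k_i} \le \lg\binom{n}{m}$ for every realized class tuple; rounding each offset up to an integer number of bits adds $O(n/b) = O(n/\lg n)$ bits, and the class headers contribute $O(n\lg\lg n/\lg n)$. For global $\rankop_1$, I would then sample absolute partial sums $|B[1..jL]|_1$ at super-block boundaries of length $L = \lg^{c+2} n$ and relative partial sums at each inner block boundary, so that a query resolves into one super-block lookup, one relative-block lookup, and one table-driven intra-block rank. Select is handled analogously by Clark-style two-level sampling: explicit answers are stored for sparse stretches of $1$s, and dense stretches are resolved by binary search over the block samples followed by a table lookup; the weaker bound $\lg\binom{n}{m} \le m\lg(n/m) + O(m)$ then follows by Stirling.

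The main obstacle is driving the redundancy down to $O(n/\lg^c n)$ for \emph{every} constant $c$ while simultaneously storing all auxiliary arrays succinctly; a flat two-level construction only achieves redundancy $O(n\lg\lg n /\lg n)$. To handle arbitrary $c$, I would apply the construction recursively to its own rank and select index arrays, shrinking block sizes by a $\mathrm{polylog}(n)$ factor at each of $O(\lg^\ast n)$ levels until block sizes become $O(1)$ and tabulation trivially applies. Propagating the rounding and header overhead across these levels without cumulative blow-up---while keeping each recursive lookup $O(1)$ time, typically via an indirection through precomputed ``spillover'' tables---is what I expect to be the most delicate step to make fully rigorous; this is precisely where Pătrașcu's compression technique is indispensable and constitutes the technical heart of the cited result.
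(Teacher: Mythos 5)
Your proposal is essentially correct and matches the source of this result: the paper gives no proof of this lemma but quotes it directly from Pătrașcu~\cite{Patrascu2008}, and your sketch faithfully reproduces the underlying construction (RRR-style class/offset block encoding plus a two-level rank/select index), while correctly pinpointing that the naive version only achieves $O(n\lg\lg n/\lg n)$ redundancy and that the recursive aggregated/``spillover'' encoding is exactly the ingredient of \cite{Patrascu2008} needed to push the redundancy to $O(n/\lg^c n)$ for arbitrary constant $c$. The only imprecision is peripheral: Pătrașcu's recursion uses a constant number of aggregation levels depending on $c$ (not $O(\lg^* n)$ levels), but since you explicitly defer that step to the cited work, nothing in your argument breaks.
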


Using wavelet trees, we can support rank and select queries on
arbitrary static strings/sequences while compressing them to zeroth-order empirical entropy.

\begin{lemma}[Wavelet tree~\cite{navarro2014wavelet}]
\label{lem:wavelet-tree-simple}
	Let $S[1..n]$ be a array with entries $S[i]\in\Sigma = [1..\sigma]$. There is a data structure using $H_0(S) + o(n)$ bits of space that supports the following queries in $O(\lg \sigma)$ time (without access to $S$ at query time):
    \begin{itemize}

	\item $\accessop(S, i)$: return $S[i]$, the symbol at index $i$ in $S$;
	\item $\rankop_\alpha(S, i)$: return the number of indices with value $\alpha \in \Sigma$ in $S[1..i]$;
	\item $\selop_\alpha(S, i)$: return the index of the $i$th occurrence of value $\alpha \in \Sigma$ in $S$.
    \end{itemize}
\end{lemma}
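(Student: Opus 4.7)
The plan is to build a balanced binary wavelet tree over the alphabet $\Sigma=[1..\sigma]$ of depth $d=\lceil\lg\sigma\rceil$. Each internal node $v$ is associated with an alphabet range $[\ell_v..r_v]$, and at $v$ I would store a bit vector $B_v$ of length $n_v$, where $n_v$ is the number of positions $i$ with $S[i]\in[\ell_v..r_v]$; bit $j$ of $B_v$ is $0$ iff the $j$th such character belongs to the left half of $v$'s range, and $1$ otherwise. Each $B_v$ I would represent using \wref{lem:compressed-bit-vectors} with a constant $c\ge 2$, so that it occupies $\lg\binom{n_v}{m_v}+O(n_v/\lg^c n_v)$ bits and supports $\accessop$, $\rankop$, and $\selop$ on $B_v$ in $O(1)$ time.

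The three queries on $S$ are then implemented by standard top-down or bottom-up traversals. For $\accessop(S,i)$, start at the root $v$ and read $B_v[i]$: if it is $0$, recurse on the left child with updated index $\rankop_0(B_v,i)$, otherwise on the right child with $\rankop_1(B_v,i)$; the leaf reached identifies $S[i]$. For $\rankop_\alpha(S,i)$, follow the root-to-leaf path prescribed by $\alpha$ with the same index transformations, returning the final transformed index. For $\selop_\alpha(S,i)$, ascend from the leaf of $\alpha$, applying $\selop_0$ or $\selop_1$ on the bit vector of each parent according to whether $\alpha$ lies in its left or right subtree. Each operation visits $O(d)=O(\lg\sigma)$ nodes and spends $O(1)$ time at each.

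For the space analysis I would invoke the recursive identity $H_0(S_v)=H_0(B_v)+H_0(S_{vL})+H_0(S_{vR})$, where $S_v$ denotes the subsequence of $S$ restricted to characters in $[\ell_v..r_v]$; this is verified by a direct computation from $H_0(T)=\sum_c|T|_c\lg(|T|/|T|_c)$, using that the count of left-half characters in $S_v$ is exactly the number of $0$-bits in $B_v$. Telescoping over the tree yields $\sum_{v\text{ internal}}H_0(B_v)=H_0(S)$, and since $\lg\binom{n_v}{m_v}\le H_0(B_v)$, the main terms of the bit-vector sizes sum to at most $H_0(S)$. The redundancies I would group by level: since the bit vectors at any fixed level have total length at most $n$, each level contributes $O(n/\lg^c n)$, and summing over $d=O(\lg n)$ levels (we may assume $\sigma\le n$ after remapping to the symbols that actually appear in $S$) gives total redundancy $O(n/\lg^{c-1}n)=o(n)$. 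The tree topology itself is fixed and balanced and can be stored implicitly. The main subtlety I expect is driving the redundancy down to $o(n)$ instead of the cruder $o(n\lg\sigma)$ one gets from a naive per-node summation: this is what forces grouping the redundancies by level, so that all bit vectors at one level share a single budget of $n$, and it is why \wref{lem:compressed-bit-vectors} must be invoked with $c\ge 2$.
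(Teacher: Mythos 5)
Your construction, the query algorithms, and the entropy telescoping $\sum_v H_0(B_v)=H_0(S)$ are the standard route and essentially what the paper does (its own proof is a one-paragraph appeal to the balanced, pointerless wavelet tree of Navarro's survey). The gap is in the redundancy accounting, and it matters precisely in the regime this paper uses the lemma: $\sigma=\Theta(n)$.

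You store each $B_v$ as a separate instance of \wref{lem:compressed-bit-vectors} but then ``group the redundancies by level'', claiming each level costs $O(n/\lg^c n)$ because its bitvectors have total length at most $n$. That step fails: the per-instance redundancy $O(n_v/\lg^c n_v)$ does not sum to $O\bigl((\sum_v n_v)/\lg^c(\sum_v n_v)\bigr)$, since $x/\lg^c x$ is subadditive and splitting a level into many short bitvectors makes the sum \emph{larger}. At the deep levels, where $\Theta(\sigma)=\Theta(n)$ nodes have $n_v=O(1)$, the per-instance overhead alone is $\Theta(n)$ bits, and on top of that you need $\Theta(\sigma)$ offsets into the variable-length compressed encodings to locate each $B_v$, i.e.\ $\Theta(n\lg n)$ bits; ``the topology is implicit'' does not cover this. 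The level budget of $O(n/\lg^c n)$ is only available if you concatenate each level into a \emph{single} bitvector of length $n$ and compress it as one object (the pointerless wavelet tree the paper invokes) --- but then the main term of \wref{lem:compressed-bit-vectors} is $\lg\binom{n}{m^{(\ell)}}$ per level, which by Vandermonde dominates $\sum_v\lg\binom{n_v}{m_v}$ and does not telescope to $H_0(S)$. Resolving this tension requires a compressor whose space is a sum of \emph{local} per-block entropies (RRR-style blocking, as in \S3.1 of the cited survey); that is the ingredient your argument is missing and the one the paper's citation silently supplies.
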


\ifsubmission{}{
	\begin{proof}
	Many variants of wavelet trees are possible~\cite{navarro2014wavelet}; 
	for our result, we use the plain encoding of the characters with $\lceil \lg\sigma \rceil$ bits each
	(balanced wavelet tree).  All levels except the last are full, so we can concatenate all bitvectors
	of nodes in the wavelet tree in level order (pointerless wavelet tree~\cite[\S2.3]{navarro2014wavelet})
	and support navigation up and down the tree.
	Moreover, using \wref{lem:compressed-bit-vectors} for this bitvector of length at most $\lceil \lg\sigma \rceil n$ yields space $H_0(S) + o(n)$ (\cite[\S3.1]{navarro2014wavelet}).
	\end{proof}
}

We lastly need a succinct data structure for ordinal trees.
Several options exist for the basic queries we need~\cite{BenoitDemaineMunroRamanRamamRao2005,GearyRamanRaman2006}; (see \cite[App.\,A]{HeMunroNekrichWildWu2020} for a comprehensive review).

\begin{lemma}[{Succinct ordinal trees}]
    \label{lem:tree-succinct}
    Let $T$ be an ordinal tree on $n$ vertices.
    There is a data structure using $2n+o(n)$ bits of space that supports the following queries in $O(1)$ time 
    (where nodes are identified with their preorder index in $T$):
    \begin{itemize}
        \item $\mathsf{parent}(T,v)$: return the parent of $v$ in $T$;
        \item $\mathsf{degree}(T,v)$: return the number of children of $v$ in $T$;
        \item $\mathsf{child}(T,v,i)$: return the $i$th child of $v$ in $T$.
    \end{itemize}
\end{lemma}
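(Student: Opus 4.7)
The plan is to use the \DFUDS{} (Depth-First Unary Degree Sequence) representation. Traverse $T$ in preorder and, for each node $v$, emit $\mathsf{degree}(v)$ opening parentheses ($1$-bits) followed by a single closing parenthesis ($0$-bit); prepending one extra open yields a balanced bit vector $B$ of length $2n+1$. Store $B$ via \wref{lem:compressed-bit-vectors}, which uses $\lg\binom{2n+1}{n}+o(n) \le 2n+o(n)$ bits and supports $\accessop$, $\rankop_\alpha$, $\selop_\alpha$ in $O(1)$. Augment $B$ with the classical $o(n)$-bit auxiliary data structure for balanced parentheses (\eg, the range min-max tree, or the pioneer-family construction) supporting $\mathsf{findclose}(B,p)$, $\mathsf{findopen}(B,p)$, and $\mathsf{enclose}(B,p)$ in $O(1)$. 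The total space is $2n+o(n)$ bits.

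The \DFUDS{} layout gives a clean translation between preorder indices and positions in $B$: the block of the $i$-th preorder node occupies positions $p_i := \selop_0(B,i-1)+1$ through $c_i := \selop_0(B,i)$, with the convention $\selop_0(B,0):=1$ to absorb the prepended open; conversely, for any position $p\ge 2$, the preorder index of the node whose block contains $p$ is $\rankop_0(B,p-1)+1$.

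Each query is then a constant-length expression in these primitives. $\TrDeg(T,i) = c_i - p_i$ is the length of node $i$'s open block. For $\TrChild(T,i,j)$, I use the \DFUDS{} identity that the $j$-th child of $v$ (from the left) corresponds to the $j$-th open \emph{from the right} in $v$'s block, i.e.\ the open at position $c_i - j$; its matching close $q := \mathsf{findclose}(B, c_i - j)$ sits immediately before the start of the $j$-th child's block, so the child's preorder index is $\rankop_0(B,q)+1$. For $\TrParent(T,i)$, the close $c_i$ is matched with an open lying inside the parent's block, so $\TrParent(T,i) = \rankop_0(B,\mathsf{findopen}(B,c_i)-1)+1$ (equivalently, $\mathsf{enclose}(B,c_i)$ may be used in place of $\mathsf{findopen}(B,c_i)$).

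The main obstacle is the $o(n)$-bit implementation of $\mathsf{findclose}/\mathsf{findopen}/\mathsf{enclose}$, which I invoke as a black box from the succinct-tree literature. The only remaining non-black-box step is to verify the two \DFUDS{} identities used above — that the $j$-th child of $v$ corresponds to the $j$-th rightmost open in $v$'s block, and that $\mathsf{findopen}$ of a node's close paren lands in its parent's block — both of which follow by a routine induction on $T$, exploiting that the \DFUDS{} of each subtree forms a contiguous segment of $B$.
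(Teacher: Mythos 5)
The paper does not prove this lemma at all; it cites it as a known black-box result (Benoit et al.\ and Geary--Raman--Raman), and your \DFUDS-based reconstruction is essentially the standard construction from exactly those references, so the approach is the right one. The space accounting is fine (the sequence has length $2n$, not $2n+1$: $n-1$ opens from the degrees, one prepended open, and $n$ closes, though this off-by-one is harmless), and your $\mathsf{degree}$ and $\mathsf{child}$ formulas are correct.

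However, your $\mathsf{parent}$ formula is wrong as stated. The close paren $c_i$ that \emph{terminates} node $i$'s block is not, in general, matched with an open in the parent's block. Take the tree with root $r$, children $a,b$, and $c$ a child of $a$; the \DFUDS{} string is $\texttt{(\,(\,(\,)\,(\,)\,)\,)}$ with blocks $r=[2..4]$, $a=[5..6]$, $c=[7]$, $b=[8]$. For $a$ we have $c_a=6$ and $\mathsf{findopen}(B,6)=5$, which lies in $a$'s \emph{own} block, so your formula returns $a$ as its own parent; for the leaf $c$ we have $c_c=7$ and $\mathsf{findopen}(B,7)=2$, which lies in $r$'s block, so your formula returns $r$ instead of $a$. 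The correct identity is that the close paren immediately \emph{preceding} node $i$'s block, at position $p_i-1=\selop_0(B,i-1)$, is matched with the open in the parent's block that corresponds to the edge into node $i$; hence $\TrParent(T,i)=\rankop_0(B,\mathsf{findopen}(B,\selop_0(B,i-1))-1)+1$ (and this is also the open paren whose position within the parent's block recovers the child rank). With that one-position correction, and the $o(n)$-bit $\mathsf{findopen}/\mathsf{findclose}$ machinery you already invoke as a black box, the argument goes through.
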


\section{Space Lower Bound}
\label{sec:lower-bound}

In this section, we derive the instance-specific lower bounds for preferential-attachment graphs.
\wref{sec:proof-lower-bound-lbl} gives the main result for \emph{labelled} graphs.
\ifproceedings{%
	The appendix of the \arxivpaper
}{%
	\wref{app:proof-lower-bound-ulbl} 
}%
gives the proof for the \emph{unlabelled} case.

Let $G \sim \PrefAttch$ be a labelled preferential-attachment graph. Recall that~$G$ is drawn iteratively, where vertex~$v_t$ arrives at time~$t$, for~$t = 1, \ldots, n$. The graph after the arrival of~$v_t$ is~$G_t$, where $V(G_t) = \{v_0, \ldots, v_t\}$. The total number of edges in~$G = G_{n}$ is~$m = Mn$.

For time~$t = 1, \ldots, n$, let~$\OutNeighborhood(v_t)$ denote the random multiset of size~$M$ that contains the~$M$ out-neighbours of~$v_t$ randomly sampled from the set~$V(G_{t-1})  = \{v_0, v_1, \ldots, v_{t-1}\}$. The vertex~$v_t$ attaches to a vertex in~$V(G_{t-1})$  with probability proportional to its degree at time~$t-1$:
\begin{equation}
\label{eq:prob_dist}
		P^{(t)}_{i} 
	\wrel= 
		\Prob[\big]{v_i \in \OutNeighborhood(v_t) \given G_{t-1}} 
	\wwrel= 
		\frac {d_{t-1}(v_i)} {2(t-1)M} \;.
\end{equation}

\paragraph{Multinomial distribution of~$\OutNeighborhood(v_t)$} 
The frequencies of appearance of vertices as out-neighbours of~$v_t$ 
follow a multinomial distribution. 
The number of trials in this case is~$M$, where each trial when~$v_t$ chooses the neighbour from the set~$V(G_{t-1})$ is independent of each other. 
Also, each trial has~$t$ possible outcomes $v_0, v_1, \ldots v_{t-1}$, with probabilities as given by Equation~\eqref{eq:prob_dist}. Therefore, we have (conditional on $G_{t-1}$), that $(C^{(t)}_0,\ldots,C^{(t)}_{t-1}) \sim \MultinomialDist(M; P^{(t)}_0, P^{(t)}_1, \ldots, P^{(t)}_{t-1})$, where $C_i^{(t)}$ denote the number of times the vertex~$v_i$ is chosen randomly as a neighbour of~$v_t$ at time~$t$. Therefore,

\[
		\Prob[\big]{ (C^{(t)}_0,\ldots,C^{(t)}_{t-1}) = (c_0, \ldots c_{t-1}) \given G_{t-1}} 
		\wrel= \binom{M}{c_{0}, \ldots ,  c_{t-1}} (P^{(t)}_0)^{c_{0}} \ldots(P^{(t)}_{t-1})^{c_{t-1}} .
\]

\subsection[Proof of Theorem \protect\ref*{thm:lower-bound-lbl}]{Proof of \wref{thm:lower-bound-lbl}}
\label{sec:proof-lower-bound-lbl}
In this subsection, we derive a lower bound on $\lg(1/\Prob{G})$ for $\Prob{G}$ the probability that a given graph $G\sim\PrefAttch$ results from a preferential-attachment process $\PrefAttch$. 

Recall that~$G = G_{n}$. Using the product rule, we get 
\[
		\Prob{G} 
	\wwrel= 
		\Prob{\OutNeighborhood(v_{n}) \given G_{n-1}} 
		\cdot
		\Prob{\OutNeighborhood(v_{n-1}) \given G_{n-2}} 
		\;\cdots\;
		\Prob{\OutNeighborhood(v_{2}) \given G_{1}} 
		\cdot \Prob{G_1}.
\]
This implies
\begin{align*}
		\lg\left(\frac{1}{\Prob{G}}\right) 
	&\wwrel= 
		\sum_{t = 2}^{n} \lg \left(\frac{1}{ \Prob{\OutNeighborhood(v_t)\given G_{t-1}} } \right) 
	\\&\wwrel= 
		\sum_{t = 2}^{n} \left[\sum_{i= 0}^{t-1} C_{i}^{(t)} \lg \left(\frac{2(t-1)M}{d_{t-1}(v_i)}\right) 
		\bin- \lg\left(\binom{M}{C_{0}^{(t)},\ldots, C_{t-1}^{(t)}}\right)  \right].
\numberthis\label{eq:lower_bound}
\end{align*}
Consider $- \sum_{t = 2}^{n}  \lg\left(\binom{M}{C_{0}^{(t)}, \ldots, C_{t-1}^{(t)}}\right)$. We have
\begin{align*}
		-\lg\left(\binom{M}{C_{0}^{(t)},\ldots, C_{t-1}^{(t)}}\right)
	&\wwrel=  
		\lg\left(\frac{C_{0}^{(t)}! \ldots C_{t-1}^{(t)}!}{M!}\right) 
	\wwrel\geq 
		\lg\Bigl(\frac{1}{M!}\Bigr) 
	\wwrel= 
		- \lg(M!).
\end{align*}

\noindent Now let us consider 
the first summand in \wref{eq:lower_bound},
\begin{align*}
	&
		\sum_{t = 2}^{n} \sum_{i= 0}^{t-1} C_{i}^{(t)} \lg \left(\frac{2(t-1)M}{d_{t-1}(v_i)}\right)
\\	&\wwrel=  
		\sum_{t = 1}^{n-1}\left(\sum_{i=0}^{t} C_{i}^{(t+1)} \lg \left(2tM\right)\right) 
		\bin-  \sum_{t = 1}^{n-1}\left(\sum_{i= 0}^{t} C_{i}^{(t+1)} \lg \left({d_{t}(v_i)}\right)\right).
\numberthis\label{eq:lb-first-summand-split}
\end{align*}
Let us analyse the first summand. With $\sum_{i= 0}^{t} C_{i}^{(t+1)} = M$, we find
\begin{align*}
		\sum_{t = 1}^{n-1}\left(\sum_{i=0}^{t} C_{i}^{(t+1)} \lg \left(2tM\right)\right) 
	&\wwrel=
		\sum_{t = 1}^{n-1}\lg \left(2tM\right) \left(\sum_{i= 0}^{t} C_{i}^{(t+1)} \right)
\\	&\wwrel= 
		M (n-1)\lg(2M) + M\lg((n-1)!).
\end{align*}

\noindent Now consider the second summand in \wref{eq:lb-first-summand-split}.
Swapping the order of summation yields
\begin{align*}
		-\sum_{t = 1}^{n-1}\left(\sum_{i= 0}^{t} C_{i}^{(t+1)} \lg \left({d_{t}(v_i)}\right)\right)
	&\wwrel=
		-\sum_{i = 0}^{n-1}\left(\sum_{t=i+1}^{n-1} C_{i}^{(t+1)} \lg \left({d_{t}(v_i)}\right)\right).
\end{align*}

Let~$T_i = \{t\ge 2: C_{i}^{(t)} > 0\}$ denote the set of timestamps such that the vertex~$v_i$ is selected as an out-neighbour; we have $T_i = \{t^{(i)}_1,\ldots,t^{(i)}_{\InDegreeInGraph(v_i)}\}$ with $t^{(i)}_1 \le\cdots\le t^{(i)}_{\InDegreeInGraph(v_i)}$, \ie, just before time $t_k^{(i)}$, $v_i$'s total degree was $d_{t_k^{(i)}-1}(v_i) = M+k-1$.%
\footnote{%
	For ease of notation, we do not count the $M$ edges from $v_1$ to $v_0$ in $\InDegreeInGraph(v_0)$ here;
	then for all vertices, we have $\DegreeInGraph(v_i) = \InDegreeInGraph(v_i)+M$.
	That is also why we only include times $t\ge 2$ in $T_i$.
}
So far, the analysis works for general graphs;
for the following simplification, we assume that no parallel edges%
\footnote{
    We note that $G_M$ necessarily contains parallel edges, and our assumptions concern the edges chosen for $t>M$.
    The contribution of these parallel edges in the calculation below can be shown to be $O(M)$ overall,
    so we ignore their presence for legibility.
}
are added in the graph (for $t\ge 2$).
Then $C_{i}^{(t)} \le 1$ and the summand becomes
\begin{align*}
		-\sum_{i = 0}^{n-1}\sum_{\substack{t=i+1\\t\ge 2}}^{n} C_{i}^{(t)} \lg \left({d_{t-1}(v_i)}\right)
	&\wwrel=
		-\sum_{i = 0}^{n-1}\sum_{k=1}^{\InDegreeInGraph(v_i)} \lg \left(M+k-1\right)
\\	&\wwrel=
		-\sum_{i = 0}^{n-1}\left(\sum_{k=1}^{\InDegreeInGraph(v_i)+M-1} \lg(k) 
			-  \sum_{k=1}^{M-1} \lg(k)\right)
\\	&\wwrel=
		-\sum_{i=0}^{n-1} \lg( (\DegreeInGraph(v_i)-1)!)
		\bin + n\lg((M-1)!).
\end{align*}
Therefore, Equation~\eqref{eq:lower_bound} becomes
\begin{align*}
		\lg\left(\frac{1}{\mathbb{P}[G]}\right) 
	&\wwrel\ge
		M(n-1)\lg(2M) + M(\lg((n-1)!))
\\*	&\wwrel\ppe{}
		-\sum_{i=0}^{n-1} \lg( (\DegreeInGraph(v_i)-1)!)
					\bin + n\lg((M-1)!)
\\*	&\wwrel\ppe{}
		- (n-1)\lg(M!)
\\	&\wwrel=
		M n \lg (2Mn) %
		\bin-\sum_{i=0}^{n-1} \DegreeInGraph(v_i)\lg( \DegreeInGraph(v_i)) %
		\wbin\pm O(Mn)
\shortintertext{(using $\DegreeInGraph(v) = \InDegreeInGraph(v) + M$)}
	&\wwrel=
		Mn \underbrace{\sum_{i=0}^{n-1} \frac{\InDegreeInGraph(v_i)}{Mn} \lg \left(\frac{2Mn}{\DegreeInGraph(v_i)}\right)}_{(*)}
		\bin-\underbrace{\sum_{i=0}^{n-1} M\lg( \DegreeInGraph(v_i))}_{(\dag)}
		\wbin\pm O(Mn)
\intertext{(bounding $(*)$ by Gibbs' inequality, and $(\dag)$ by the log-sum inequality)}
	&\wwrel\ge
		Mn \sum_{i=0}^{n-1} \frac{\InDegreeInGraph(v_i)}{Mn} \lg \left(\frac{Mn}{\InDegreeInGraph(v_i)}\right)
		\wbin\pm O(Mn\lg M)
\\	&\wwrel=
		\DegreeEntropy(G)
		\wbin\pm O(Mn\lg M).
\numberthis\label{eq:lb-entropy}
\end{align*}
This proves \wref{thm:lower-bound-lbl}.

\ifproceedings{}{
	\medskip\noindent
	(The proof of \wref{thm:lower-bound-ulbl} is given in \wref{app:proof-lower-bound-ulbl}.)
}

\section{Data Structures}
\label{sec:data-structures}

In this section, we describe data structures that can represent a directed graph generated by the preferential-attachment process in compressed form, while allowing for efficient navigational queries without decompression.

For labelled graphs, we only use the wavelet tree portion of the data structure described in the following. 
We omit the simple modifications here and present the data structure for unlabelled graphs in detail;  
(see also~\cite[\S5.3]{navarro2014wavelet} for the labelled case).

\begin{figure}[tbp]
    \centering
    \ifsubmission{
        \scalebox{.7}{
\begin{tikzpicture}[
    node distance = 0pt,
    square_blue/.style = {draw=blue!60, fill=blue!5, very thick, minimum height=1.2em, minimum width=2em, outer sep=0pt},
    square_green/.style = {draw=green!60, fill=green!5, very thick, minimum height=1.2em, minimum width=2em, outer sep=0pt}, scale=1.2,
    ]

    
    \node[thick,shape=circle,rounded corners, draw=black, minimum size=6mm] (0) at (-1.25,0) {$v_0$};
    \node[thick,shape=circle,rounded corners, draw=black, minimum size=6mm] (1) at (0,0)  {$v_1$};
    \node[thick,shape=circle,rounded corners, draw=black, minimum size=6mm] (2) at (2,0)  {$v_2$};
    \node[thick,shape=circle,rounded corners, draw=black, minimum size=6mm] (3) at (4,0)  {$v_3$};
    \node[thick,shape=circle,rounded corners, draw=black, minimum size=6mm] (4) at (6,0)  {$v_4$};
    \node[thick,shape=circle,rounded corners, draw=black, minimum size=6mm] (5) at (8,0)  {$v_5$};
    \draw[>=Stealth, ->,very thick] (1) edge[densely dashed,red]  (0);
    \draw[>=Stealth, ->,very thick] (1) edge[bend right=30]  (0);
    \draw[>=Stealth, ->,very thick] (1) edge[bend left=30]  (0);
    \draw[>=Stealth, ->,very thick] (2) edge[bend right=20] (1);
    \draw[>=Stealth, ->,very thick] (2) edge[bend left=20] (1);
    \draw[>=Stealth, ->,very thick] (3) edge[bend left=20] (1);
    \draw[>=Stealth, ->,very thick] (3) edge[bend right=35] (1);
    \draw[>=Stealth, ->,very thick] (4) edge[bend right=20] (2);
    \draw[>=Stealth, ->,very thick] (4) edge[bend left=20] (2);
    \draw[>=Stealth, ->,very thick] (4) edge[red, densely dashed] (3);
    \draw[>=Stealth, ->,very thick] (5) edge[bend right=20] (4);
    \draw[>=Stealth, ->,very thick] (5) edge[bend left=20] (4);
    \draw[>=Stealth, ->,very thick] (5) edge[bend right=35,densely dashed,red] (3);
    \draw[>=Stealth, ->,very thick] (2) edge[red, densely dashed] (1);
    \draw[>=Stealth, ->,very thick] (3) edge[bend right=20,red, densely dashed] (1);

    \node[] (G) at (-2,0){$G$:};
    
    \node[] (N1) at (0,-1){$\OutNeighborhood(1)$};
    \node[] (N1) at (2,-1){$\OutNeighborhood(2)$};
    \node[] (N1) at (4,-1){$\OutNeighborhood(3)$};
    \node[] (N1) at (6,-1){$\OutNeighborhood(4)$};
    \node[] (N1) at (8,-1){$\OutNeighborhood(5)$};
    
    \node[] (A) at (-2,-1.5){$A$:};
    
    \node[square_blue]   (1b) at (0,-1.5){0};
    \node[square_blue, left=of 1b] (1a)  {0};
    \node[square_blue, right=of 1b] (1c) {0};

    \node[square_green]   (2b) at (2,-1.5){1};
    \node[square_green, left=of 2b] (2a) {1};
    \node[square_green, right=of 2b] (2c) {1};

    \node[square_blue]   (3b) at (4,-1.5){1};
    \node[square_blue, left=of 3b] (3a) {1};
    \node[square_blue, right=of 3b] (3c) {1};

    \node[square_green]   (4b) at (6,-1.5){2};
    \node[square_green, left=of 4b] (4a) {3};
    \node[square_green, right=of 4b] (4c) {2};

    \node[square_blue]   (5b) at (8,-1.5){4};
    \node[square_blue, left=of 5b] (5a) {3};
    \node[square_blue, right=of 5b] (5c) {4};

    \node[] (Ap) at (-2.05,-2){$A'$:};

    \node[square_blue]   (1bp) at (0,-2){0};
    \node[square_blue, right=of 1bp] (1cp) {0};

    \node[square_green]   (2bp) at (2,-2){1};
    \node[square_green, right=of 2bp] (2cp) {1};

    \node[square_blue]   (3bp) at (4,-2){1};
    \node[square_blue, left=of 3bp] (3ap) {1};

    \node[square_green]   (4bp) at (6,-2){2};
    \node[square_green, right=of 4bp] (4cp) {2};

    \node[square_blue]   (5bp) at (8,-2){4};
    \node[square_blue, right=of 5bp] (5cp) {4};


    \node[] (G) at (-2,-4.5){$T$:};

    \node[shape=circle,rounded corners, draw=black, minimum size=6mm] (0t) at (4,-3){$0$};
    \node[shape=circle,rounded corners, draw=black, minimum size=6mm] (1t) at (4,-4){$1$};
    \node[shape=circle,rounded corners, draw=black, minimum size=6mm] (2t) at (3,-5){$2$};
    \node[shape=circle,rounded corners, draw=black, minimum size=6mm] (3t) at (5,-5){$3$};
    \node[shape=circle,rounded corners, draw=black, minimum size=6mm] (4t) at (4,-6){$4$};
    \node[shape=circle,rounded corners, draw=black, minimum size=6mm] (5t) at (6,-6){$5$};
    \draw[>=Triangle, -, ultra thick] (1t) edge[red, densely dashed] coordinate (1e) (0t);
    \draw[>=Triangle, -, ultra thick] (1t) edge[red, densely dashed] coordinate (2e) (2t);
    \draw[>=Triangle, -, ultra thick] (1t) edge[red, densely dashed] coordinate (3e) (3t);
    \draw[>=Triangle, -, ultra thick] (3t) edge[red, densely dashed] coordinate (4e) (4t);
    \draw[>=Triangle, -, ultra thick] (3t) edge[red, densely dashed] coordinate (5e) (5t);
    \draw[->, thick, shorten >=3pt] (1a) edge[gray, densely dotted, out=-90,in=180] (1e);
    \draw[->, thick, shorten >=3pt] (2a) edge[gray, densely dotted, out=-90,in=135] (2e);
    \draw[->, thick, shorten >=3pt] (3c) edge[gray, densely dotted, out=-90,in=45] (3e);
    \draw[->, thick, shorten >=3pt] (4a) edge[gray, densely dotted, out=-90,in=125] (4e);
    \draw[->, thick, shorten >=3pt] (5a) edge[gray, densely dotted, out=-90,in=45] (5e);
    
\end{tikzpicture}}
    }{
	    \ifproceedings{
	        \resizebox{\linewidth}!{
\begin{tikzpicture}[
    node distance = 0pt,
    square_blue/.style = {draw=blue!60, fill=blue!5, very thick, minimum height=1.2em, minimum width=2em, outer sep=0pt},
    square_green/.style = {draw=green!60, fill=green!5, very thick, minimum height=1.2em, minimum width=2em, outer sep=0pt}, scale=1.2,
    ]

    
    \node[thick,shape=circle,rounded corners, draw=black, minimum size=6mm] (0) at (-1.25,0) {$v_0$};
    \node[thick,shape=circle,rounded corners, draw=black, minimum size=6mm] (1) at (0,0)  {$v_1$};
    \node[thick,shape=circle,rounded corners, draw=black, minimum size=6mm] (2) at (2,0)  {$v_2$};
    \node[thick,shape=circle,rounded corners, draw=black, minimum size=6mm] (3) at (4,0)  {$v_3$};
    \node[thick,shape=circle,rounded corners, draw=black, minimum size=6mm] (4) at (6,0)  {$v_4$};
    \node[thick,shape=circle,rounded corners, draw=black, minimum size=6mm] (5) at (8,0)  {$v_5$};
    \draw[>=Stealth, ->,very thick] (1) edge[densely dashed,red]  (0);
    \draw[>=Stealth, ->,very thick] (1) edge[bend right=30]  (0);
    \draw[>=Stealth, ->,very thick] (1) edge[bend left=30]  (0);
    \draw[>=Stealth, ->,very thick] (2) edge[bend right=20] (1);
    \draw[>=Stealth, ->,very thick] (2) edge[bend left=20] (1);
    \draw[>=Stealth, ->,very thick] (3) edge[bend left=20] (1);
    \draw[>=Stealth, ->,very thick] (3) edge[bend right=35] (1);
    \draw[>=Stealth, ->,very thick] (4) edge[bend right=20] (2);
    \draw[>=Stealth, ->,very thick] (4) edge[bend left=20] (2);
    \draw[>=Stealth, ->,very thick] (4) edge[red, densely dashed] (3);
    \draw[>=Stealth, ->,very thick] (5) edge[bend right=20] (4);
    \draw[>=Stealth, ->,very thick] (5) edge[bend left=20] (4);
    \draw[>=Stealth, ->,very thick] (5) edge[bend right=35,densely dashed,red] (3);
    \draw[>=Stealth, ->,very thick] (2) edge[red, densely dashed] (1);
    \draw[>=Stealth, ->,very thick] (3) edge[bend right=20,red, densely dashed] (1);

    \node[] (G) at (-2,0){$G$:};
    
    \node[] (N1) at (0,-1){$\OutNeighborhood(1)$};
    \node[] (N1) at (2,-1){$\OutNeighborhood(2)$};
    \node[] (N1) at (4,-1){$\OutNeighborhood(3)$};
    \node[] (N1) at (6,-1){$\OutNeighborhood(4)$};
    \node[] (N1) at (8,-1){$\OutNeighborhood(5)$};
    
    \node[] (A) at (-2,-1.5){$A$:};
    
    \node[square_blue]   (1b) at (0,-1.5){0};
    \node[square_blue, left=of 1b] (1a)  {0};
    \node[square_blue, right=of 1b] (1c) {0};

    \node[square_green]   (2b) at (2,-1.5){1};
    \node[square_green, left=of 2b] (2a) {1};
    \node[square_green, right=of 2b] (2c) {1};

    \node[square_blue]   (3b) at (4,-1.5){1};
    \node[square_blue, left=of 3b] (3a) {1};
    \node[square_blue, right=of 3b] (3c) {1};

    \node[square_green]   (4b) at (6,-1.5){2};
    \node[square_green, left=of 4b] (4a) {3};
    \node[square_green, right=of 4b] (4c) {2};

    \node[square_blue]   (5b) at (8,-1.5){4};
    \node[square_blue, left=of 5b] (5a) {3};
    \node[square_blue, right=of 5b] (5c) {4};

    \node[] (Ap) at (-2.05,-2){$A'$:};

    \node[square_blue]   (1bp) at (0,-2){0};
    \node[square_blue, right=of 1bp] (1cp) {0};

    \node[square_green]   (2bp) at (2,-2){1};
    \node[square_green, right=of 2bp] (2cp) {1};

    \node[square_blue]   (3bp) at (4,-2){1};
    \node[square_blue, left=of 3bp] (3ap) {1};

    \node[square_green]   (4bp) at (6,-2){2};
    \node[square_green, right=of 4bp] (4cp) {2};

    \node[square_blue]   (5bp) at (8,-2){4};
    \node[square_blue, right=of 5bp] (5cp) {4};


    \node[] (G) at (-2,-4.5){$T$:};

    \node[shape=circle,rounded corners, draw=black, minimum size=6mm] (0t) at (4,-3){$0$};
    \node[shape=circle,rounded corners, draw=black, minimum size=6mm] (1t) at (4,-4){$1$};
    \node[shape=circle,rounded corners, draw=black, minimum size=6mm] (2t) at (3,-5){$2$};
    \node[shape=circle,rounded corners, draw=black, minimum size=6mm] (3t) at (5,-5){$3$};
    \node[shape=circle,rounded corners, draw=black, minimum size=6mm] (4t) at (4,-6){$4$};
    \node[shape=circle,rounded corners, draw=black, minimum size=6mm] (5t) at (6,-6){$5$};
    \draw[>=Triangle, -, ultra thick] (1t) edge[red, densely dashed] coordinate (1e) (0t);
    \draw[>=Triangle, -, ultra thick] (1t) edge[red, densely dashed] coordinate (2e) (2t);
    \draw[>=Triangle, -, ultra thick] (1t) edge[red, densely dashed] coordinate (3e) (3t);
    \draw[>=Triangle, -, ultra thick] (3t) edge[red, densely dashed] coordinate (4e) (4t);
    \draw[>=Triangle, -, ultra thick] (3t) edge[red, densely dashed] coordinate (5e) (5t);
    \draw[->, thick, shorten >=3pt] (1a) edge[gray, densely dotted, out=-90,in=180] (1e);
    \draw[->, thick, shorten >=3pt] (2a) edge[gray, densely dotted, out=-90,in=135] (2e);
    \draw[->, thick, shorten >=3pt] (3c) edge[gray, densely dotted, out=-90,in=45] (3e);
    \draw[->, thick, shorten >=3pt] (4a) edge[gray, densely dotted, out=-90,in=125] (4e);
    \draw[->, thick, shorten >=3pt] (5a) edge[gray, densely dotted, out=-90,in=45] (5e);
    
\end{tikzpicture}}
	    }{
\begin{tikzpicture}[
    node distance = 0pt,
    square_blue/.style = {draw=blue!60, fill=blue!5, very thick, minimum height=1.2em, minimum width=2em, outer sep=0pt},
    square_green/.style = {draw=green!60, fill=green!5, very thick, minimum height=1.2em, minimum width=2em, outer sep=0pt}, scale=1.2,
    ]

    
    \node[thick,shape=circle,rounded corners, draw=black, minimum size=6mm] (0) at (-1.25,0) {$v_0$};
    \node[thick,shape=circle,rounded corners, draw=black, minimum size=6mm] (1) at (0,0)  {$v_1$};
    \node[thick,shape=circle,rounded corners, draw=black, minimum size=6mm] (2) at (2,0)  {$v_2$};
    \node[thick,shape=circle,rounded corners, draw=black, minimum size=6mm] (3) at (4,0)  {$v_3$};
    \node[thick,shape=circle,rounded corners, draw=black, minimum size=6mm] (4) at (6,0)  {$v_4$};
    \node[thick,shape=circle,rounded corners, draw=black, minimum size=6mm] (5) at (8,0)  {$v_5$};
    \draw[>=Stealth, ->,very thick] (1) edge[densely dashed,red]  (0);
    \draw[>=Stealth, ->,very thick] (1) edge[bend right=30]  (0);
    \draw[>=Stealth, ->,very thick] (1) edge[bend left=30]  (0);
    \draw[>=Stealth, ->,very thick] (2) edge[bend right=20] (1);
    \draw[>=Stealth, ->,very thick] (2) edge[bend left=20] (1);
    \draw[>=Stealth, ->,very thick] (3) edge[bend left=20] (1);
    \draw[>=Stealth, ->,very thick] (3) edge[bend right=35] (1);
    \draw[>=Stealth, ->,very thick] (4) edge[bend right=20] (2);
    \draw[>=Stealth, ->,very thick] (4) edge[bend left=20] (2);
    \draw[>=Stealth, ->,very thick] (4) edge[red, densely dashed] (3);
    \draw[>=Stealth, ->,very thick] (5) edge[bend right=20] (4);
    \draw[>=Stealth, ->,very thick] (5) edge[bend left=20] (4);
    \draw[>=Stealth, ->,very thick] (5) edge[bend right=35,densely dashed,red] (3);
    \draw[>=Stealth, ->,very thick] (2) edge[red, densely dashed] (1);
    \draw[>=Stealth, ->,very thick] (3) edge[bend right=20,red, densely dashed] (1);

    \node[] (G) at (-2,0){$G$:};
    
    \node[] (N1) at (0,-1){$\OutNeighborhood(1)$};
    \node[] (N1) at (2,-1){$\OutNeighborhood(2)$};
    \node[] (N1) at (4,-1){$\OutNeighborhood(3)$};
    \node[] (N1) at (6,-1){$\OutNeighborhood(4)$};
    \node[] (N1) at (8,-1){$\OutNeighborhood(5)$};
    
    \node[] (A) at (-2,-1.5){$A$:};
    
    \node[square_blue]   (1b) at (0,-1.5){0};
    \node[square_blue, left=of 1b] (1a)  {0};
    \node[square_blue, right=of 1b] (1c) {0};

    \node[square_green]   (2b) at (2,-1.5){1};
    \node[square_green, left=of 2b] (2a) {1};
    \node[square_green, right=of 2b] (2c) {1};

    \node[square_blue]   (3b) at (4,-1.5){1};
    \node[square_blue, left=of 3b] (3a) {1};
    \node[square_blue, right=of 3b] (3c) {1};

    \node[square_green]   (4b) at (6,-1.5){2};
    \node[square_green, left=of 4b] (4a) {3};
    \node[square_green, right=of 4b] (4c) {2};

    \node[square_blue]   (5b) at (8,-1.5){4};
    \node[square_blue, left=of 5b] (5a) {3};
    \node[square_blue, right=of 5b] (5c) {4};

    \node[] (Ap) at (-2.05,-2){$A'$:};

    \node[square_blue]   (1bp) at (0,-2){0};
    \node[square_blue, right=of 1bp] (1cp) {0};

    \node[square_green]   (2bp) at (2,-2){1};
    \node[square_green, right=of 2bp] (2cp) {1};

    \node[square_blue]   (3bp) at (4,-2){1};
    \node[square_blue, left=of 3bp] (3ap) {1};

    \node[square_green]   (4bp) at (6,-2){2};
    \node[square_green, right=of 4bp] (4cp) {2};

    \node[square_blue]   (5bp) at (8,-2){4};
    \node[square_blue, right=of 5bp] (5cp) {4};


    \node[] (G) at (-2,-4.5){$T$:};

    \node[shape=circle,rounded corners, draw=black, minimum size=6mm] (0t) at (4,-3){$0$};
    \node[shape=circle,rounded corners, draw=black, minimum size=6mm] (1t) at (4,-4){$1$};
    \node[shape=circle,rounded corners, draw=black, minimum size=6mm] (2t) at (3,-5){$2$};
    \node[shape=circle,rounded corners, draw=black, minimum size=6mm] (3t) at (5,-5){$3$};
    \node[shape=circle,rounded corners, draw=black, minimum size=6mm] (4t) at (4,-6){$4$};
    \node[shape=circle,rounded corners, draw=black, minimum size=6mm] (5t) at (6,-6){$5$};
    \draw[>=Triangle, -, ultra thick] (1t) edge[red, densely dashed] coordinate (1e) (0t);
    \draw[>=Triangle, -, ultra thick] (1t) edge[red, densely dashed] coordinate (2e) (2t);
    \draw[>=Triangle, -, ultra thick] (1t) edge[red, densely dashed] coordinate (3e) (3t);
    \draw[>=Triangle, -, ultra thick] (3t) edge[red, densely dashed] coordinate (4e) (4t);
    \draw[>=Triangle, -, ultra thick] (3t) edge[red, densely dashed] coordinate (5e) (5t);
    \draw[->, thick, shorten >=3pt] (1a) edge[gray, densely dotted, out=-90,in=180] (1e);
    \draw[->, thick, shorten >=3pt] (2a) edge[gray, densely dotted, out=-90,in=135] (2e);
    \draw[->, thick, shorten >=3pt] (3c) edge[gray, densely dotted, out=-90,in=45] (3e);
    \draw[->, thick, shorten >=3pt] (4a) edge[gray, densely dotted, out=-90,in=125] (4e);
    \draw[->, thick, shorten >=3pt] (5a) edge[gray, densely dotted, out=-90,in=45] (5e);
    
\end{tikzpicture}
    	}
    }
    \caption{Illustration of the construction of $T$, $A$ and $A'$ on a sample preferential-attachment graph where $n=5$ and $M=3$. The dashed red edges in $G$ represent the edges that are represented in $T$. Note that the vertices are labelled by their preorder index in $T$.}
    \label{fig:ds-construction}
\end{figure}
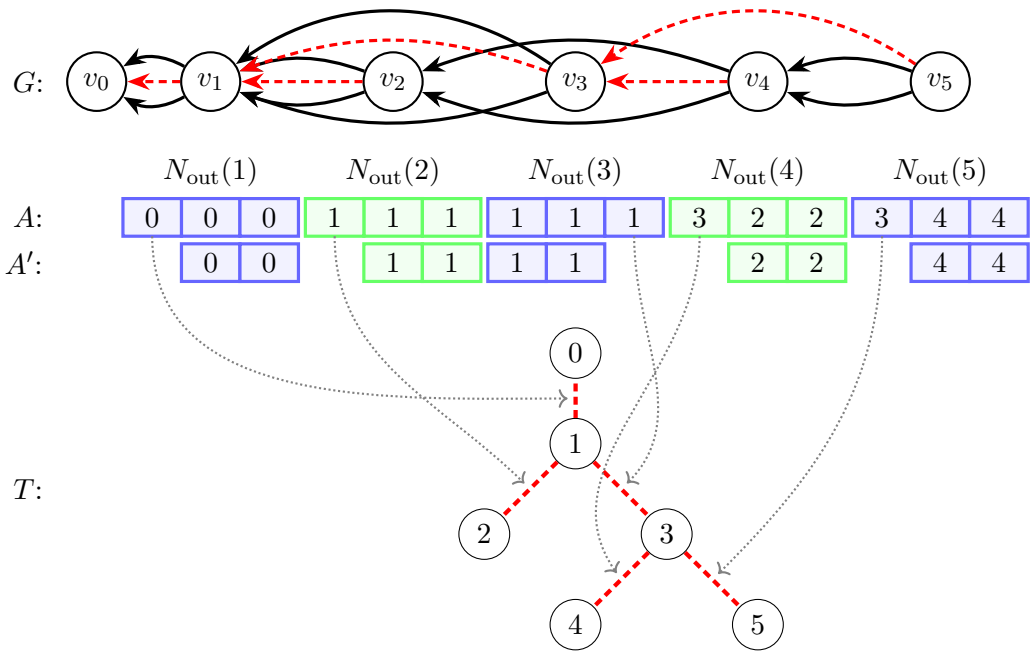

\subsection{Construction}
\label{ss:construction}

Let $G' \sim \PrefAttch$ be generated by the Barabási-Albert model.
By iteratively ``peeling off'' degree-$M$ vertices, we uniquely obtain the directed version $G=\DAGof{G'}$~\cite{LuczakMagnerSzpankowski2019}.

We now sort vertices by indegree and construct a tree $T$ informally by choosing, for each vertex $v_t$ in $G$ (except for $v_0$), a parent among its out-neighbours with minimal (in-)degree.
Formally, define a bijective \emph{rank} function $\sigma : \{v_0,\ldots,v_n\} \rightarrow [0..n]$ such that $\InDegreeInGraph(u) > \InDegreeInGraph(v)$ implies $\sigma(u) > \sigma(v)$. 
Construct a tree $T$ with $V(T)=V(G)$ and
\[
		E(T)
	\wwrel=
		\bigcup_{t=1}^{n} \Bigl\{(v_t,\mathop{\arg \min}\limits_{w\in \OutNeighborhood(v_t)} \sigma(w))\Bigr\}.
\]
Observe that $T$ is indeed a tree since it consists of $n+1$ vertices and $n$ edges and is acyclic by definition. Vertex $v_0$ is the root of $T$.

We will store $T$ using \wref{lem:tree-succinct}, where operations refer to nodes via their preorder index.
We hence perform a preorder traversal of $T$ from its root, and relabel $v_t\in V(G)$ by its index in the traversal, starting from $0$. 

For each vertex $v\neq 0$ in $G$, define an order on its outgoing edges such that the edge $(v,u)$, for some $u$, is the first if and only if $(v,u)\in E(T)$, and all other vertices $(v,w)$, $u \neq w$, are ordered arbitrarily. Finally, construct a static array $A'$ such that the $(i+1)$th ($i \in [1..M)$) out-neighbour of vertex $j$ in $G$ is stored in $A'[(j-1)(M-1)+i]$.
We store $A'$ in a wavelet tree using \wref{lem:wavelet-tree-simple}.

\subsection{Operations}
\label{sec:operations}

Building upon the aforementioned data structures, we present algorithms to efficiently handle a variety of navigational queries on $G$. 
Define $\mathsf{source}(i)=\lceil\frac{i}{M-1}\rceil$ as the source of the edge whose target is stored in $A'[i]$.

\paragraph{Computing the $i$th out-neighbour of $v$} 
The $i$th \emph{out-neighbour} of $v$ in $G$ can be determined as follows: if $i = 1$, it is the parent of $v$ in $T$; otherwise, it is given by $A'[(v-1)(M-1)+i-1]$ (the $(i-1)$th element of $A'[(v-1)(M-1)+1 .. v(M-1)]$, which represents all but the first out-neighbour of $v$).

\paragraph{Computing the $i$th in-neighbour of $v$} 
To determine the $i$th \emph{in-neighbour} of $v$, we distinguish two cases:  
\begin{enumerate}
\item If $v$ has at least $i$ children in $T$, then its $i$th in-neighbour in $G$ is simply its $i$th child in~$T$.  
\item Otherwise, if $i > \mathsf{degree}(T, v)$, we determine the $(i - \mathsf{degree}(T,v))$th occurrence of $v$ in $A'$. The vertex whose neighbourhood includes this occurrence is found by computing $\mathsf{source}(\mathsf{select}_v(S,i-\mathsf{degree}(T,v)))$.
\end{enumerate}
Using these two operations, we can also iterate through the neighbourhoods or return all (in- resp.\ out-) neighbours.

\paragraph{Computing the in-degree of $v$} 
The in-degree of a vertex $v$ can be determined by summing the number of its children in $T$ (computable by $\mathsf{degree}(T,v)$) and its occurrences in $A'$ (computable by $\mathsf{rank}_v(A',n(M-1)$).

\paragraph{Determining if $u$ and $v$ are adjacent} 
If $u$ and $v$ are adjacent, either (1) $u$ is the parent of $v$ in $T$, $\mathsf{parent}(T,v)=u$, (2) vice versa, $\mathsf{parent}(T,u) = v$, or 
(3) $v$ occurs in $A'[(u-1)(M-1)+1..u(M-1)]$ or (4) vice versa, $u$ occurs in $A'[(v-1)(M-1)+1..v(M-1)]$.
(If either of them is $v_0$, two cases need not be checked.)
Conditions (1) and (2) are directly supported on $T$; for (3) and (4), we can use rank:
If $\mathsf{rank}_v(A',u(M-1))-\mathsf{rank}_v(A',(u-1)(M-1)) \geq 1$, then $v$ occurs at least once, so $v$ and $u$ are adjacent. (4) is similar.

\medskip\noindent
To conclude the proof of \wref{thm:data-structure}, 
it remains to analyse the space usage of our data structure;
\ifsubmission{%
    we give the details in \wref{app:space-analysis-ds}.
}{%
	\ifproceedings{%
		details are given in the appendix of the \arxivpaper.
	}{%
		details are given in \wref{app:space-analysis-ds}.
	}
}%

\begin{remark}[General graphs]
    We point out that the data-structure techniques above
    can be extended to general graphs. 
    Using a compressed bit vector to mark where the next neighbourhoods begin,
    we can support the same queries with $n \lg(m/n) + O(n)$ extra space.
\end{remark}

%
%
%
%
%
%
%
%
%
%
%
%
%
%
%
%
%
%
%
%
%
%
%
%
%
%
%
%


\section{Conclusion}
We designed a compressed representation for graphs generated by the Barabási-Albert model of random preferential-attachment graphs approaching the instance-optimal $\lg(1/\Prob{G})$ bits of space, where~$\Prob{G}$ is the probability for the graph to arise in the model. We further related $\lg(1/\Prob{G})$ to the empirical degree-entropy of $G$, $\DegreeEntropy(G)$; they coincide up to an error term in both labelled and unlabelled graphs. Our compressed representation supports navigational queries in $O(\log n)$ time and can simulate access to an adjacency-list representation. 

Future work might study other models of preferential attachment. For example, the model introduced by Cooper and Frieze~\cite{cooper2003general}, where the number of edges added follows a given distribution. One can also consider a non-linear model in which the probability of a vertex, when $M$ edges are added, being chosen is proportional to its degree raised to some power $\alpha$. (Here $\alpha = 1$.)

More broadly speaking, this work merely made initial strides into data structures that adapt to information-theoretic measures of compressibility in graphs, leaving many avenues for future work open.
For example, unlike in trees, a convincing notion of higher-order entropy for graphs is not yet emerging, let alone data structures approaching them.

	\myacknowledgements
%

%
\bibliography{references}

\newcommand{\etalchar}[1]{$^{#1}$}
\providecommand{\NOOPSORT}[1]{}
\begin{thebibliography}{MNSBW21}

\bibitem[ACJ{\etalchar{+}}22]{AcanChakrabortyJoNakashimaSadakaneRao2022}
H\"{u}seyin Acan, Sankardeep Chakraborty, Seungbum Jo, Kei Nakashima, Kunihiko
  Sadakane, and Srinivasa Rao~Satti.
\newblock Succinct navigational oracles for families of intersection graphs on
  a circle.
\newblock {\em Theoretical Computer Science}, 928:151–166, September 2022.
\newblock \href {https://doi.org/10.1016/j.tcs.2022.06.022}
  {\path{doi:10.1016/j.tcs.2022.06.022}}.

\bibitem[ACJRS20]{AcanChakrabortyJoRao2020}
H\"{u}seyin Acan, Sankardeep Chakraborty, Seungbum Jo, and Srinivasa Rao~Satti.
\newblock Succinct encodings for families of interval graphs.
\newblock {\em Algorithmica}, 83(3):776–794, April 2020.
\newblock \href {https://doi.org/10.1007/s00453-020-00710-w}
  {\path{doi:10.1007/s00453-020-00710-w}}.

\bibitem[BA99]{BarabasiAlbert1999}
Albert-L\'aszl\'o Barab\'asi and R\'eka Albert.
\newblock Emergence of scaling in random networks.
\newblock {\em Science}, 286(5439):509–512, October 1999.
\newblock \href {https://doi.org/10.1126/science.286.5439.509}
  {\path{doi:10.1126/science.286.5439.509}}.

\bibitem[BDM{\etalchar{+}}05]{BenoitDemaineMunroRamanRamamRao2005}
David Benoit, Erik~D. Demaine, J.~Ian Munro, Rajeev Raman, Venkatesh Raman, and
  S.~Srinivasa Rao.
\newblock Representing trees of higher degree.
\newblock {\em Algorithmica}, 43(4):275--292, 2005.
\newblock \href {https://doi.org/10.1007/s00453-004-1146-6}
  {\path{doi:10.1007/s00453-004-1146-6}}.

\bibitem[BH19]{BestaHoefler2019}
Maciej Besta and Torsten Hoefler.
\newblock Survey and taxonomy of lossless graph compression and space-efficient
  graph representations, 2019.
\newblock \href {https://arxiv.org/abs/1806.01799} {\path{arXiv:1806.01799}}.

\bibitem[BKL19]{BringmannKeuschLengler2019}
Karl Bringmann, Ralph Keusch, and Johannes Lengler.
\newblock Geometric inhomogeneous random graphs.
\newblock {\em Theoretical Computer Science}, 760:35–54, February 2019.
\newblock \href {https://doi.org/10.1016/j.tcs.2018.08.014}
  {\path{doi:10.1016/j.tcs.2018.08.014}}.

\bibitem[BLKB21]{BouritsasLoukasKaraliasBronstein2021}
Giorgos Bouritsas, Andreas Loukas, Nikolaos Karalias, and Michael Bronstein.
\newblock Partition and code: learning how to compress graphs.
\newblock In M.~Ranzato, A.~Beygelzimer, Y.~Dauphin, P.S. Liang, and J.~Wortman
  Vaughan, editors, {\em Advances in Neural Information Processing Systems},
  volume~34, pages 18603--18619. Curran Associates, Inc., 2021.

\bibitem[BV04]{BoldiVigna2004}
P.~Boldi and S.~Vigna.
\newblock The webgraph framework i: compression techniques.
\newblock In {\em Proceedings of the 13th international conference on World
  Wide Web}, WWW04, page 595–602. ACM, May 2004.
\newblock \href {https://doi.org/10.1145/988672.988752}
  {\path{doi:10.1145/988672.988752}}.

\bibitem[CF03]{cooper2003general}
Colin Cooper and Alan Frieze.
\newblock A general model of web graphs.
\newblock {\em Random Structures \& Algorithms}, 22(3):311--335, 2003.

\bibitem[CJSRS23]{ChakrabortyJoSadakaneRao2023}
Sankardeep Chakraborty, Seungbum Jo, Kunihiko Sadakane, and Srinivasa
  Rao~Satti.
\newblock Succinct data structures for sp, block-cactus and 3-leaf power
  graphs.
\newblock {\em International Journal of Foundations of Computer Science}, page
  1–18, April 2023.
\newblock \href {https://doi.org/10.1142/s012905412341006x}
  {\path{doi:10.1142/s012905412341006x}}.

\bibitem[CJSRS24]{ChakrabortyJoSadakaneRao2024}
Sankardeep Chakraborty, Seungbum Jo, Kunihiko Sadakane, and Srinivasa
  Rao~Satti.
\newblock Succinct data structures for bounded clique-width graphs.
\newblock {\em Discrete Applied Mathematics}, 352:55–68, July 2024.
\newblock \href {https://doi.org/10.1016/j.dam.2024.03.016}
  {\path{doi:10.1016/j.dam.2024.03.016}}.

\bibitem[CS12]{ChoiSzpankowski2012}
Yongwook Choi and Wojciech Szpankowski.
\newblock Compression of graphical structures: Fundamental limits, algorithms,
  and experiments.
\newblock {\em IEEE Transactions on Information Theory}, 58(2):620–638,
  February 2012.
\newblock \href {https://doi.org/10.1109/tit.2011.2173710}
  {\path{doi:10.1109/tit.2011.2173710}}.

\bibitem[FM00]{FerraginaManzini2000}
P.~Ferragina and G.~Manzini.
\newblock Opportunistic data structures with applications.
\newblock In {\em Annual Symposium on Foundations of Computer Science
  ({FOCS})}. {IEEE} Comput. Soc, 2000.
\newblock \href {https://doi.org/10.1109/sfcs.2000.892127}
  {\path{doi:10.1109/sfcs.2000.892127}}.

\bibitem[FP16]{FischerPeters2016}
Johannes Fischer and Daniel Peters.
\newblock Glouds: Representing tree-like graphs.
\newblock {\em Journal of Discrete Algorithms}, 36:39–49, January 2016.
\newblock \href {https://doi.org/10.1016/j.jda.2015.10.004}
  {\path{doi:10.1016/j.jda.2015.10.004}}.

\bibitem[GRR06]{GearyRamanRaman2006}
Richard~F. Geary, Rajeev Raman, and Venkatesh Raman.
\newblock Succinct ordinal trees with level-ancestor queries.
\newblock {\em {ACM} Transactions on Algorithms}, 2(4):510--534, October 2006.
\newblock \href {https://doi.org/10.1145/1198513.1198516}
  {\path{doi:10.1145/1198513.1198516}}.

\bibitem[HLSB19]{HuckeLohreySeelbachBenkner2019}
Danny Hucke, Markus Lohrey, and Louisa Seelbach~Benkner.
\newblock Entropy bounds for grammar-based tree compressors.
\newblock In {\em IEEE International Symposium on Information Theory (ISIT)},
  page 1687–1691. IEEE, July 2019.
\newblock \href {https://doi.org/10.1109/isit.2019.8849372}
  {\path{doi:10.1109/isit.2019.8849372}}.

\bibitem[HLSB20]{HuckeLohreySeelbachBenkner2020}
Danny Hucke, Markus Lohrey, and Louisa Seelbach~Benkner.
\newblock A comparison of empirical tree entropies.
\newblock In {\em String Processing and Information Retrieval (SPIRE)}, page
  232–246. Springer, 2020.
\newblock \href {https://doi.org/10.1007/978-3-030-59212-7_17}
  {\path{doi:10.1007/978-3-030-59212-7_17}}.

\bibitem[HMN{\etalchar{+}}20]{HeMunroNekrichWildWu2020}
Meng He, J.~Ian Munro, Yakov Nekrich, Sebastian Wild, and Kaiyu Wu.
\newblock Distance oracles for interval graphs via breadth-first rank/select in
  succinct trees.
\newblock In {\em International Symposium on Algorithms and Computation
  (ISAAC)}, LIPIcs, pages 25:1--25:18. Schloss Dagstuhl, 2020.
\newblock \href {https://arxiv.org/abs/2005.07644} {\path{arXiv:2005.07644}}.

\bibitem[JSS12]{JanssonSadakaneSung2012}
Jesper Jansson, Kunihiko Sadakane, and Wing-Kin Sung.
\newblock Ultra-succinct representation of ordered trees with applications.
\newblock {\em Journal of Computer and System Sciences}, 78(2):619–631, March
  2012.
\newblock \href {https://doi.org/10.1016/j.jcss.2011.09.002}
  {\path{doi:10.1016/j.jcss.2011.09.002}}.

\bibitem[KLPS22]{KontoyiannisLimPapakonstantinopoulouSzpankowski2022}
Ioannis Kontoyiannis, Yi~Heng Lim, Katia Papakonstantinopoulou, and Wojtek
  Szpankowski.
\newblock Compression and symmetry of small-world graphs and structures.
\newblock {\em Communications in Information and Systems}, 22(2):275–302,
  2022.
\newblock \href {https://doi.org/10.4310/cis.2022.v22.n2.a5}
  {\path{doi:10.4310/cis.2022.v22.n2.a5}}.

\bibitem[KYS09]{KiefferYangSzpankowski2009}
John~C. Kieffer, En-Hui Yang, and Wojciech Szpankowski.
\newblock Structural complexity of random binary trees.
\newblock In {\em 2009 {IEEE} International Symposium on Information Theory}.
  {IEEE}, jun 2009.
\newblock \href {https://doi.org/10.1109/isit.2009.5205704}
  {\path{doi:10.1109/isit.2009.5205704}}.

\bibitem[LMS19a]{LuczakMagnerSzpankowski2019}
Tomasz Luczak, Abram Magner, and Wojciech Szpankowski.
\newblock Asymmetry and structural information in preferential attachment
  graphs.
\newblock {\em Random Structures \& Algorithms}, 55(3):696–718, March 2019.
\newblock \href {https://doi.org/10.1002/rsa.20842}
  {\path{doi:10.1002/rsa.20842}}.

\bibitem[LMS19b]{LuczakMagnerSzpankowski2019isit}
Tomasz Luczak, Abram Magner, and Wojciech Szpankowski.
\newblock Compression of preferential attachment graphs.
\newblock In {\em IEEE International Symposium on Information Theory (ISIT)},
  page 1697–1701. IEEE, July 2019.
\newblock \href {https://doi.org/10.1109/isit.2019.8849739}
  {\path{doi:10.1109/isit.2019.8849739}}.

\bibitem[MGSS17]{magner2017recovery}
Abram Magner, Ananth Grama, Jithin Sreedharan, and Wojciech Szpankowski.
\newblock Recovery of vertex orderings in dynamic graphs.
\newblock In {\em 2017 IEEE International Symposium on Information Theory
  (ISIT)}, pages 1563--1567. IEEE, 2017.

\bibitem[MNSBW21]{MunroNicholsonSeelbachBenknerWild2021}
J.~Ian Munro, Patrick~K. Nicholson, Louisa Seelbach~Benkner, and Sebastian
  Wild.
\newblock Hypersuccinct trees -- new universal tree source codes for optimal
  compressed tree data structures and range minima.
\newblock In P.~Mutzel, R.~Pagh, and G~Herman, editors, {\em European Symposium
  on Algorithms ({ESA})}, pages 70:1--70:18. Schloss Dagstuhl - Leibniz-Zentrum
  f\"{u}r Informatik, 2021.
\newblock \href {https://arxiv.org/abs/2104.13457} {\path{arXiv:2104.13457}},
  \href {https://doi.org/10.4230/LIPICS.ESA.2021.70}
  {\path{doi:10.4230/LIPICS.ESA.2021.70}}.

\bibitem[MOA79]{marshall1979inequalities}
Albert~W Marshall, Ingram Olkin, and Barry~C Arnold.
\newblock Inequalities: theory of majorization and its applications, 1979.

\bibitem[MW18]{MunroWu2018}
J.~Ian Munro and Kaiyu Wu.
\newblock Succinct data structures for chordal graphs.
\newblock In {\em 29th International Symposium on Algorithms and Computation,
  {ISAAC} 2018, December 16-19, 2018, Jiaoxi, Yilan, Taiwan}, pages
  67:1--67:12, 2018.
\newblock \href {https://doi.org/10.4230/LIPIcs.ISAAC.2018.67}
  {\path{doi:10.4230/LIPIcs.ISAAC.2018.67}}.

\bibitem[Nav14]{navarro2014wavelet}
Gonzalo Navarro.
\newblock Wavelet trees for all.
\newblock {\em Journal of Discrete Algorithms}, 25:2--20, 2014.

\bibitem[P{\u a}t08]{Patrascu2008}
Mihai P{\u a}tra{\textcommabelow s}cu.
\newblock Succincter.
\newblock In {\em Symposium on Foundations of Computer Science (FOCS)}. {IEEE},
  October 2008.
\newblock \href {https://doi.org/10.1109/focs.2008.83}
  {\path{doi:10.1109/focs.2008.83}}.

\bibitem[SB23]{SeelbachBenkner2023}
Louisa Seelbach~Benkner.
\newblock {\em Combinatorial and information-theoretic aspects of tree
  compression}.
\newblock PhD thesis, University of Siegen, 2023.
\newblock \href {https://doi.org/10.25819/UBSI/10429}
  {\path{doi:10.25819/UBSI/10429}}.

\bibitem[TMS20]{TurowskiMagnerSzpankowski2020}
Krzysztof Turowski, Abram Magner, and Wojciech Szpankowski.
\newblock Compression of dynamic graphs generated by a duplication model.
\newblock {\em Algorithmica}, 82(9):2687–2707, April 2020.
\newblock \href {https://doi.org/10.1007/s00453-020-00699-2}
  {\path{doi:10.1007/s00453-020-00699-2}}.

\bibitem[TWZ23]{TsakalidisWildZamaraev2023}
Konstantinos Tsakalidis, Sebastian Wild, and Viktor Zamaraev.
\newblock Succinct permutation graphs.
\newblock {\em Algorithmica}, 85(2):509--543, 2023.
\newblock \href {https://arxiv.org/abs/2010.04108} {\path{arXiv:2010.04108}},
  \href {https://doi.org/10.1007/s00453-022-01039-2}
  {\path{doi:10.1007/s00453-022-01039-2}}.

\end{thebibliography}

\clearpage
\appendix
\ifkoma{\addpart{Appendix}}{}

%
%
%
%
%
    \section[Proof of Theorem \protect\ref*{thm:lower-bound-ulbl} (Lower Bound for Unlabelled PA Graphs)]{Proof of \protect\wref{thm:lower-bound-ulbl} (Lower Bound for Unlabelled PA Graphs)}
\label{app:proof-lower-bound-ulbl}

Assuming an undirected graph $G$ has been generated by the Barabási-Albert process,
we can uniquely reconstruct its directed version by iteratively ``peeling'' off degree-$M$ vertices~\cite{LuczakMagnerSzpankowski2019}.
Note that the same is \emph{not} true about the arrival times; in general the resulting DAG contains only partial information about the vertex arrival order: any linear extension of the partial order induced by the DAG is an admissible arrival order. 
Our analysis builds on results by \L{}uczak et al. who proved that 
(1) most PA graphs have $n!\cdot2^{-O(n \lg \lg n)}$ admissible arrival orders~\cite[p.\,715]{LuczakMagnerSzpankowski2019} and 
(2) that each of these arises from $\PrefAttch$ with equal probability~\cite[Lem.\,7]{LuczakMagnerSzpankowski2019}.
We reproduce the relevant arguments here.

We define the admissible set $\mathrm{Adm}(S)$ of a given unlabelled graph $S$ to be the set of all labelled graphs $G$ with $S(G) = S$ such that $G$ could have been generated according to the preferential-attachment model. We can also define~$\mathrm{Adm}(G) := \mathrm{Adm}(S(G))$.
For a graph $G$, we define $\Gamma(G)$ to be the set of permutations~$\pi$ such that $\pi(G) \in \mathrm{Adm}(G)$. 
Let~$\mathrm{Aut}(G)$ be the automorphism group of~$G$.
For any~$G$, we have the following equation~\cite{magner2017recovery}
\begin{equation}
    |\mathrm{Adm}(G)| := \frac{|\Gamma(G)|}{|\mathrm{Aut}(G)|}.
\end{equation}

For a graph~$G \sim \PrefAttch$ and $\UDAGof{G}$, every possible way to order the vertices will result in the \emph{same} probability for generating the resulting labelled~$\DAGof{G}$ since we can reorder the numerators and denominators from~\wref{eq:prob_dist}.

\begin{lemma}[{{\cite[Lem.\,7]{LuczakMagnerSzpankowski2019}}}]
\label{lem:same-prob}
	Let $H \sim \PrefAttch$ for some $M \geq 1$. For any two graphs $G$, $G'$ without parallel edges apart from the seed graph $G_1$ satisfying $\UDAGof{G} = \UDAGof{G'}$, we have
	$\mathbb{P}(H = G) = \mathbb{P}(H = G')$.
\end{lemma}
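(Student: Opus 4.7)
The plan is to track how $\Prob{H = G}$ factorises along the PA process and show that, once the no-parallel-edges assumption is imposed, the factorisation depends only on the \emph{in-degree sequence} of $\DAGof{G}$, which is an isomorphism invariant of $\UDAGof{G}$.

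First, I would write
\[
    \Prob{H = G} \wwrel= \Prob{G_1} \cdot \prod_{t=2}^{n}\Prob{\OutNeighborhood(v_t)\given G_{t-1}}
\]
and expand each conditional factor using the multinomial form from \wref{eq:prob_dist}. Since $G$ has no parallel edges for $t\ge 2$, every $C_i^{(t)}\in\{0,1\}$, so each multinomial coefficient $\binom{M}{C_0^{(t)},\ldots,C_{t-1}^{(t)}}$ equals $M!$. The denominator contributions collapse to $\prod_{t=2}^{n}(2(t-1)M)^M$, which depends only on $n$ and $M$. Likewise $\Prob{G_1}$ is a constant that does not depend on the labelling of $G$.

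Next, I would isolate the remaining ``numerator'' factor $D(G) \defeq \prod_{t=2}^{n}\prod_{i=0}^{t-1} d_{t-1}(v_i)^{C_i^{(t)}}$ and re-index by swapping the order of summation, exactly as in the derivation leading to \wref{eq:lb-entropy}. For each vertex $v_i$, list the times $t^{(i)}_1<\cdots<t^{(i)}_{\InDegreeInGraph(v_i)}$ at which $v_i$ is chosen as an out-neighbour; at time $t^{(i)}_k-1$ vertex $v_i$ has exactly $k-1$ in-neighbours (adopting the convention, as in \wref{sec:proof-lower-bound-lbl}, that the $M$ seed edges are not counted in $\InDegreeInGraph(v_0)$, so $\DegreeInGraph(v_i)=\InDegreeInGraph(v_i)+M$ holds uniformly). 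Hence the $k$-th occurrence contributes the factor $M+k-1$, and
\[
    D(G) \wwrel= \prod_{i=0}^{n}\,\prod_{k=1}^{\InDegreeInGraph(v_i)}(M+k-1)
    \wwrel= \prod_{i=0}^{n}\frac{(M+\InDegreeInGraph(v_i)-1)!}{(M-1)!}.
\]

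Finally, I would observe that $D(G)$ depends only on the multiset $\{\InDegreeInGraph(v_i):i\in[0..n]\}$, which is an isomorphism invariant of $\UDAGof{G}$. Because $\UDAGof{G}=\UDAGof{G'}$, the two in-degree sequences coincide, so $D(G)=D(G')$; all remaining factors in $\Prob{H=G}$ depend only on $n$ and $M$. Therefore $\Prob{H=G}=\Prob{H=G'}$, as required.

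The main subtlety, rather than a hard obstacle, is bookkeeping around the seed graph $G_1$ and vertex $v_0$: one must check that the argument goes through uniformly for $v_0$ under the convention that its $M$ seed edges are excluded from $\InDegreeInGraph(v_0)$ (so that $v_0$'s initial degree $M$ matches every other vertex at arrival), and that $\Prob{G_1}$ contributes a factor independent of the labelling of $G$. Everything else is a direct consequence of the calculation already carried out in \wref{sec:proof-lower-bound-lbl}, specialised from an inequality to an equality by virtue of the no-parallel-edge hypothesis (which turns $\lg\binom{M}{C_0^{(t)},\ldots,C_{t-1}^{(t)}}\ge -\lg M!$ into equality).
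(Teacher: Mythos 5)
Your proof is correct and follows essentially the same route as the paper, which (citing Łuczak et al.) simply observes that the product of transition probabilities is invariant under reordering the vertices because the numerators (degrees at selection time, namely $M, M+1, \ldots$ per vertex) and denominators ($2(t-1)M$, each occurring $M$ times) can be matched up across the two orderings. Your version just makes this explicit by isolating the factor $\prod_i (M+\InDegreeInGraph(v_i)-1)!/(M-1)!$ and noting it is an isomorphism invariant.
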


For an unlabelled graph $S(G)$ without parallel edges, \wref{lem:same-prob} implies
	
\[
		\Prob{S(G)} 
	\wwrel= 
		\Prob{G} \cdot |\text{Adm}(S(G))|.
\]
Since $|\mathrm{Adm}(G)| \leq n!$, the instance-specific lower bound for unlabelled~$S(G)$ is
\[
		\lg\left(\frac{1}{\Prob{S(G)}}\right) 
	\wwrel= 
		\lg\left(\frac{1}{\Prob{G}}\right) - \lg|\text{Adm}(S(G))| 
	\wwrel\ge 
		\lg(1/\Prob{G}) - \lg(n!),
\]
which proves \wref{thm:lower-bound-ulbl}.

    \section{Space and Time Analysis of our Data Structure}
\label{app:space-analysis-ds}

In this section, we prove the space and time complexity of our preferential-attachment graph representation.
Note that for labelled graphs, the result follows directly from the guarantees of wavelet trees:
We use \wref{lem:wavelet-tree-simple} to represent $A=A(G)$; since we have (by definition) that $H_0(A(G)) = \DegreeEntropy(G)$, \wref{thm:data-structure-labeled} follows.

For \wref{thm:data-structure-unlabeled}, we first state the achieved results in terms of the empirical entropy of $A'$:

\begin{lemma}
\label{lem:ds-S-T}
    There exist data structures to represent $T$ and $A'$ in $H_0(A')+2n+o(Mn)$ bits of space, while allowing for $\mathsf{N_{out}}(v,i)$, $\mathsf{N_{in}}(v,i)$, $\mathsf{degree}(v)$, and $\mathsf{adjacency}(u,v)$ to run in $\mathrm{O}(\lg n)$ time, and $\mathsf{N_{out}}(v)$ and $\mathsf{N_{in}}(v)$, in $\mathrm{O}(\lg n)$ time per neighbour.
\end{lemma}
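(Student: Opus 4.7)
The plan is to verify that plugging \wref{lem:tree-succinct} for $T$ and \wref{lem:wavelet-tree-simple} for $A'$ into the algorithms already described in \wref{sec:operations} yields both the claimed space and the claimed time bounds; there is no deep new idea required here, only careful bookkeeping.

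For the space, I would first apply \wref{lem:tree-succinct} to the ordinal tree $T$, which has $n+1$ vertices, obtaining a representation in $2(n+1)+o(n) = 2n+o(n)$ bits that supports $\mathsf{parent}$, $\mathsf{degree}$, and $\mathsf{child}$ in $O(1)$ time, with nodes addressed by preorder index (which by construction in \wref{ss:construction} is the very relabelling used throughout). For $A'$ I would apply \wref{lem:wavelet-tree-simple}: since $|A'| = n(M-1)$ and the alphabet is $V(G) = [0..n]$ of size $\sigma = n+1$, the resulting data structure uses $H_0(A') + o(|A'|) = H_0(A') + o(Mn)$ bits and supports $\accessop$, $\rankop_\alpha$, $\selop_\alpha$ in $O(\lg \sigma) = O(\lg n)$ time. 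Summing these two contributions gives exactly $H_0(A') + 2n + o(Mn)$ bits.

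For the time bounds, I would then walk through each operation described in \wref{sec:operations} and count primitive calls. The $i$th out-neighbour $\mathsf{N_{out}}(v,i)$ needs either one $\mathsf{parent}$ call (for $i=1$) or one $\accessop$ on $A'$, so $O(\lg n)$. The $i$th in-neighbour $\mathsf{N_{in}}(v,i)$ needs at most one $\mathsf{degree}$ call on $T$ (to decide the case), followed by either a $\mathsf{child}$ call or a $\selop_v$ on $A'$ plus the $O(1)$ arithmetic to evaluate $\mathsf{source}$, again $O(\lg n)$. The in-degree $\mathsf{degree}(v)$ combines one $\mathsf{degree}(T,v)$ with one $\rankop_v(A',\,\cdot\,)$ call, so $O(\lg n)$. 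Adjacency of $u,v$ reduces to two $\mathsf{parent}$ tests and two pairs of $\rankop$ queries, hence $O(\lg n)$ as well. Iterating $\mathsf{N_{out}}(v)$ or $\mathsf{N_{in}}(v)$ then costs $O(\lg n)$ per reported neighbour by repeating the corresponding point query.

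The only point of caution, and the closest thing to an obstacle, is ensuring consistency of the vertex naming between the two data structures: the wavelet tree must store the \emph{preorder} labels (as used by \wref{lem:tree-succinct}), not the original arrival times $t$, so that a $\selop$ result in $A'$ can be interpreted directly as a vertex identifier in $T$ and vice versa. This is already built into the construction in \wref{ss:construction}, but is worth stating explicitly in the proof. With that in place, the claimed bounds follow by adding up the contributions above.
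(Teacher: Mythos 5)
Your proposal is correct and follows essentially the same route as the paper's proof: apply \wref{lem:tree-succinct} to $T$ and \wref{lem:wavelet-tree-simple} to $A'$, sum the space contributions to get $H_0(A')+2n+o(Mn)$ bits, and bound each graph operation by a constant number of $O(1)$ tree primitives plus $O(\lg n)$ wavelet-tree primitives. Your explicit remark that the wavelet tree must store preorder labels so that the two structures share a consistent vertex naming is a point the paper leaves implicit in the construction, but it does not change the argument.
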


\begin{proof}
    As stated in \wref{lem:tree-succinct}, the tree $T$ can be succinctly represented using $2n + o(n)$ bits, while supporting relevant tree operations in constant time. Since $T$ encodes the first out-neighbour of each vertex in $G$, the remaining $n(M-1)$ edges in $A'$ must be stored separately.

    We store $A'$ using a wavelet tree. By \wref{lem:wavelet-tree-simple}, this uses $H_0(A') + o(Mn)$ bits (since $|A'| = n(M-1)$), while allowing access, rank, and select queries in $\mathrm{O}(\lg \sigma)$ time, where $\sigma$, the alphabet size, is $n$ here.

    Overall, the data structures representing the preferential-attachment graph $G$ occupy a total of $H_0(A') + 2n + o(Mn)$ bits.

    As established in \wref{lem:wavelet-tree-simple} and \wref{lem:tree-succinct}, the operations $\mathsf{parent}(T,v)$, $\mathsf{child}(T,v,i)$, and $\mathsf{degree}(T,v)$ are supported in $\mathrm{O}(1)$ time, while $\mathsf{access}(S,i)$, $\mathsf{rank}_v(S,i)$, and $\mathsf{select}_v(S,i)$ run in $\mathrm{O}(\lg n)$ time. Consequently, $\mathsf{N_{out}}(v,i)$, $\mathsf{N_{in}}(v,i)$, $\mathsf{degree}(v)$, and $\mathsf{adjacency}(u,v)$ can be computed in $\mathrm{O}(\lg n)$ time, while $\mathsf{N_{out}}(v)$ and $\mathsf{N_{in}}(v)$ require $\mathrm{O}(\lg n)$ time per neighbour.
\end{proof}

Towards the proof of \wref{thm:data-structure}, we now need to connect $H_0(A')$ and $\DegreeEntropy(S) = H_0(A)$.
We start observing that the trivial bound for the empirical entropy, $H_0(A') \le (M-1)n \lg n$, yields the second term in \wref{thm:data-structure-unlabeled}. 
Since this is also the expected value of our lower bound (computed by \L uczak et al.~\cite{LuczakMagnerSzpankowski2019}), for typical graphs, our data structure uses asymptotically the optimal space for an unlabeled preferential-attachment graph.

For a closer analysis, it is convenient to consider the 
\emph{per-character empirical entropy}: for a string $w \in \Sigma^n$ of length $n=|w|$,
we define $H_0^{\mathit{pc}}(w) = \frac1n H_0(w) \le \lg |\Sigma|$.
We point out that in general, deleting a character from a string may increase or decrease $H_0^{\mathit{pc}}$.
In \wref{app:proof-conjecture}, we will show that our scheme of choosing $T$, however, never increases the per-character entropy:

\begin{lemma}
\label{lem:conjecture}
	Let $G$ be any $M$-out-regular DAG $G$ and let $A=A(G)$ and $A'$ be as per our construction. 
	Then $H_0^{\mathit{pc}}(A') \leq H_0^{\mathit{pc}}(A)$.
\end{lemma}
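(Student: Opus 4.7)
My plan is to work with the empirical character distributions directly. Write $c_v$, $c'_v$, $k_v$ for the multiplicity of vertex $v$ in $A$, in $A'$, and among the $n$ removed parent characters, respectively; so $c_v = c'_v + k_v$, $\sum_v c_v = nM$, and $\sum_v k_v = n$. Introduce probability vectors $p_v = c_v/(nM)$, $p'_v = c'_v/(n(M-1))$, $r_v = k_v/n$; then $H_0^{\mathit{pc}}(A) = H(p)$, $H_0^{\mathit{pc}}(A') = H(p')$, and a short count gives the convex-combination identity $p = \tfrac{M-1}{M}\,p' + \tfrac{1}{M}\,r$. The goal becomes $H(p') \leq H(p)$.

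Writing each entropy as $\lg(\text{length}) - \sum(\cdot)\,\lg c_{(\cdot)}$ and substituting the convex combination into $\sum p_v\lg c_v$, I will derive the algebraic identity
\[
    H(p) - H(p')
    \;=\;
    \lg\tfrac{M}{M-1}
    \;-\; \sum_v p'_v \lg\tfrac{c_v}{c'_v}
    \;+\; \tfrac{1}{M}\sum_v (p'_v - r_v)\,\lg c_v .
\]
Jensen's inequality for the concave function $\lg$, with weights $p'_v$ and arguments $c_v/c'_v$, yields
\[
    \sum_v p'_v \lg\tfrac{c_v}{c'_v}
    \;\leq\;
    \lg\Bigl(\sum_{v:\, c'_v > 0} p'_v \cdot \tfrac{c_v}{c'_v}\Bigr)
    \;=\;
    \lg\Bigl(\tfrac{1}{n(M-1)}\!\!\sum_{v:\,c'_v>0} c_v\Bigr)
    \;\leq\;
    \lg\tfrac{M}{M-1},
\]
using $\sum_v c_v = nM$ in the last step. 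Hence the first two summands of the identity combine to a non-negative quantity.

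To see that the third summand is also non-negative, I will use the min-rank selection rule. Conditioning on a uniformly random row $T \in [n]$, the parent $S_T$ is by construction an out-neighbour of $v_T$ with the smallest count $c_{(\cdot)}$, so $c_{S_T} \leq c_X$ for every other out-neighbour $X$ of $v_T$. Averaging over $T$ lifts this row-level inequality to $\sum_v r_v\,f(c_v) \leq \sum_v p'_v\,f(c_v)$ for any non-decreasing function $f$; specialising to $f = \lg$ shows $\tfrac{1}{M}\sum_v (p'_v - r_v)\lg c_v \geq 0$. Combining with the Jensen bound of the previous step gives $H(p) \geq H(p')$, as required.

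The main hurdle, I expect, is arriving at the right algebraic identity: the overshoot $\lg\frac{M}{M-1}$ produced by normalising by the shorter length $n(M-1)$ is exactly absorbed by the Jensen bound on $\sum_v p'_v \lg(c_v/c'_v)$, and the min-rank property is only needed to push the residual cross term in the right direction. A minor technicality is vertices with $c'_v = 0$, where $c_v/c'_v$ is undefined; the convention $0\lg 0 = 0$ makes their contribution vanish, and they can be harmlessly dropped from the Jensen sum.
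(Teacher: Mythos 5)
Your argument is correct, and it takes a genuinely different route from the paper's. The paper proves the stronger statement that the sorted frequency vector of $A$ is majorised by that of $A'$ and then invokes Schur-concavity of entropy; establishing the prefix-sum inequalities there requires an induction with a two-case floor/ceiling analysis of how many copies of each character are deleted. You instead decompose the entropy difference directly: the convex-combination identity $p=\tfrac{M-1}{M}p'+\tfrac1M r$ is exactly right, your algebraic identity for $H(p)-H(p')$ checks out on expansion, Jensen's inequality applied to $\sum_v p'_v\lg(c_v/c'_v)$ (with the $c'_v=0$ terms vanishing under the usual $0\lg 0=0$ convention) absorbs the $\lg\tfrac{M}{M-1}$ normalisation overshoot, and the minimum-in-degree parent-selection rule gives $c_{S_t}\le c_{X}$ for every other out-neighbour $X$ within each block, which after summing over blocks yields $\sum_v r_v\lg c_v\le\sum_v p'_v\lg c_v$ and hence non-negativity of the residual term. (You implicitly assume $M\ge 2$, as the paper does; $M=1$ is trivial since $A'$ is empty.) Your proof is shorter and more transparent about where the selection rule is actually needed --- only to orient the cross term $\tfrac1M\sum_v(p'_v-r_v)\lg c_v$ --- whereas the paper's majorisation argument yields the stronger dominance relation between the two distributions (and thus the inequality for every Schur-concave functional, not just entropy) at the cost of a considerably longer computation. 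Both arguments rely essentially on all blocks having equal length $M$, consistent with the paper's remark that unequal block lengths can defeat the bound.
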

Using \wref{lem:conjecture}, the space from \wref{lem:ds-S-T} becomes
\begin{align*}
		H_0(A') + 2n + o(Mn)
	&\wwrel=
		H_0^{\mathit{pc}}(A')|A'| + 2n + o(Mn)
\\	&\wwrel=
		H_0^{\mathit{pc}}(A')(M-1)n + 2n + o(Mn)
\\	&\wwrel\le
		H_0^{\mathit{pc}}(A(S))(M-1)n + 2n + o(Mn)
\\	&\wwrel=
		H_0(A(S))\frac{(M-1)n}{Mn} + 2n + o(Mn)
\\	&\wwrel=
		\DegreeEntropy(S)\bigl(1-\tfrac1M\bigr) + 2n + o(Mn).
\end{align*}
This concludes the proof of \wref{thm:data-structure}.

    \section[Proof of Lemma \protect\ref*{lem:conjecture} (Tree Deletion)]{Proof of \wref{lem:conjecture} (Tree Deletion)}
\label{app:proof-conjecture}

\newcommand{\concat}{\cdot}
\newcommand\pcH{H_0^{\mathit{pc}}}

In this appendix, we prove \wref{lem:conjecture}, the key lemma for bounding the space of the unlabelled graph data structure.
We will first reformulate the problem as a question on strings, which we consider natural enough to be of independent interest.

\subsection{Blocked Character Deletion}

Let $X \concat Y$ denote the concatenation of two strings $X$ and $Y$, and let $X \setminus c$, for a character $c$, denote the result of deleting a single occurrence of $c$ from $X$. 

With that, we can translate the process by which $A'$ results from $A$ to strings:
Given a string $A=X_1 \concat X_2 \concat \dots \concat X_n$ of $n$ \emph{blocks} (\ie\ substrings) and alphabet $\Sigma$, we obtain string $A'=X'_1 \concat X'_2 \concat \dots \concat X'_n$ where, for $i \in [n]$, $X'_i = X_i \setminus c_i$, for $c_i \in X_i$; \ie,
$A'$ is obtained by deleting exactly one character in each block of $A$.
Can we achieve $\pcH(A') \le \pcH(A)$?
Our goal is to show that a simple greedy method, the LFC scheme below, suffices for that, 
provided all blocks are equally long, $|X_1| = |X_2| = \cdots = |X_n| = M$.

\subsection{Least-Frequent-Character (LFC) Scheme}
\label{s:scheme}

We will now describe an algorithm that corresponds to our construction of the tree $T$ from 
\wref{sec:data-structures}.

First, let $S$ be a sorted copy of $A$, where we sort characters by increasing frequency.
Define the bijective function $\sigma : \Sigma \rightarrow [|\Sigma|]$ which maps each character of the alphabet $\Sigma$ of $A$ to a \emph{rank} such that, for distinct characters $x,y\in \Sigma$, $|A|_x > |A|_y \implies \sigma(x) > \sigma(y)$ (such a mapping must exist). Then, construct the string $S$ as follows: start with $S=\emptyset$, and for each $i$ from $1$ to $|\Sigma|$, append the character $\sigma^{-1}(i)$ $|A|_{\sigma^{-1}}$ times.

\begin{example}
    \label{ex:defaap}
    Let $n=3$, $M=4$, and $A=\texttt{abracadabraa}$. We can then define 
    \[\sigma = \{(\texttt{c},1),(\texttt{d},2),(\texttt{b},3), \\
    (\texttt{r},4),
    (\texttt{a},5)\},\]
and construct $S=\texttt{cdbbrraaaaaa}$. Note that there can be multiple ways to define $\sigma$.
\end{example}

\noindent To construct $A'$, we delete from each block its leftmost letter in $S$.
Equivalently, but more convenient for the analysis to follow,
assume without loss of generality that $\texttt{\lambda} \not\in \Sigma$ and follow the procedure described in \wref{alg:lfc}. 

\begin{algorithm}[tbhp]
\caption{Computation of $A'$ following the LFC scheme.}\label{alg:lfc}
\begin{algorithmic}[1]
\State $\hat{A} \gets A$ 
\Comment{$\hat{A}=\hat{X}_1 \concat \hat{X}_2 \concat \dots \concat \hat{X}_n$ is a copy of $A=X_1 \concat X_2 \concat \dots \concat X_n$.}
\State $F[1..n] \gets [0]^n$
\Comment{$F$ stores which blocks in $\hat{A}$ were flagged}
\For{$i=1$ to $n$}
    \State $k \gets \min \{k':S[k']\not= \texttt{\lambda}\}$
    \Comment{$k$ is set to the smallest index such that $S[k] \not= \texttt{\lambda}$.}
    \State $c \gets S[k]$
    \Comment{$c$ is set to the character at that index.}
    \State $j \gets \min \{j':c \in \hat{X}_{j'} \textnormal{ and } F[j'] = 0\}$
    \Comment{Index of some unflagged block where $c$ occurs.}
    \State $F[j] \gets 1$
    \Comment{We register that the block has been flagged.}
    \ForAll{$l \in \hat{X}_j$}
    \Comment{We iterate over all characters in the selected block $\hat{X}_j$.}
        \State $i' \gets \min \{i'':S[i'']= l \}$
        \State $S[i'] \gets \texttt{\lambda}$ 
    \EndFor
    \State $k \gets \min \{k' : \hat{X}_j[k']=c\}$
    \State $\hat{X}_j[k] \gets \texttt{\lambda}$
    \Comment{This also updates $\hat{A}$ since $\hat{X}_j \subseteq \hat{A}$.}
\EndFor
\State $A' \gets \emptyset$
\For{$i \gets 1$ to $nM$}
    \If{$\hat{A}[i] \not= \texttt{\lambda}$}
        \State $A' \gets A' \concat \hat{A}[i]$
        \Comment{$A'$ is built by copying $\hat{A}$ and deleting positions where $\texttt{\lambda}$ occurs.}
    \EndIf
\EndFor
\State \Return $A'$
\end{algorithmic}
\end{algorithm}

Initially, all blocks in $A$ are unflagged. The scheme proceeds in $n$ steps. In each step, it selects the leftmost non-$\texttt{\lambda}$ character in $S$ (call it $c$) located at position $k$ (\ie\ $S[k] = c \ne \texttt{\lambda}$), and flags every unflagged block that contains $c$; suppose it flags $p$ such blocks. Then, in each of the $p$ flagged blocks, it replaces exactly one occurrence of $c$ in the block with $\texttt{\lambda}$. For each remaining character $c'$ in the block, it replaces one occurrence of $c'$ in $S$ with $\texttt{\lambda}$. Finally, it replaces $c$ itself with $\texttt{\lambda}$.

At each step, exactly $M$ characters are replaced by $\texttt{\lambda}$ in $S$, so that after $n$ steps, all $nM$ characters in $S$ will have been replaced by $\texttt{\lambda}$. The string $A'$ is obtained by deleting all occurrences of $\texttt{\lambda}$ in $A$ following the execution of the scheme.

\begin{example}
    Let $n=3$, $M=4$, and $A=\texttt{abracadabraa}$. Define $\sigma$ as in Example~\ref{ex:defaap}; we then get $S=\texttt{cdbbrraaaaaa}$. Let us compute $A'$ following the scheme outlined in Algorithm~\ref{alg:lfc}.

\newcommand*{\mybox}[2][5cm]{%
  \makebox[#1][s]{#2}}

\begin{figure}[tbhp]
    \centering
    \bigskip
\begin{center}
    \begin{tabular}{cccc} \toprule
    \# \textbf{for} & $\hat{A}$ & $S$ & $F$ \\ \midrule
     & $\texttt{abracadabraa}$ & $\texttt{cdbbrraaaaaa}$ & $[0,0,0]$ \\
    $1$ & $\texttt{abra\textcolor{red}{\lambda ada}braa}$ & $\texttt{\lambda\lambda bbrr\lambda\lambda aaaa}$ & $[0,1,0]$ \\
    $2$ & $\textcolor{red}{\texttt{a\lambda ra}}\texttt{\lambda adabraa}$ & $\texttt{\lambda\lambda\lambda br\lambda\lambda\lambda\lambda\lambda\texttt{aa}}$ & $[1,1,0]$ \\
    $3$ & $\texttt{a\lambda ra\lambda ada\textcolor{red}{\lambda\texttt{raa}}}$ & $\texttt{\lambda\lambda\lambda\lambda\lambda\lambda\lambda\lambda\lambda\lambda\lambda\lambda}$ & $[1,1,1]$ \\ \bottomrule
    \end{tabular}
\end{center}
    \caption{Sample execution of the LFC scheme on the string $A=\texttt{abracadabraa}$. The resulting $A'$ is \texttt{araadaraa}, and $\pcH(A') \approx 1.22439 \leq \pcH(A) \approx 1.95915$.}
    \label{fig:enter-label}
\end{figure}
\end{example}

\subsection{Analysis}
\label{sec:correctness}

The proof uses a handy tool to bound entropies: the \emph{majorisation} partial order.
Let $P=(p_1,p_2, \dots, p_{n})$ and $Q=(q_1,q_2, \dots, q_{n})$ be two $n$-sized distributions. We say that $P$ \emph{majorises} $Q$ (i.e. $Q \preceq P$) if and only if $\sum_{i=1}^{k} p_i^{\downarrow} \geq \sum_{i=1}^{k} q_i^{\downarrow}$ for all $k \in [n]$, where $X^{\downarrow}=(x_1^{\downarrow}, x_2^{\downarrow}, \dots, x_{|X|}^{\downarrow})$ denotes the $|X|$-sized vector of the distribution $X$ sorted in non-increasing order.
Given that the entropy function is \emph{Schur-concave}, $Q \preceq P$ implies $\pcH(P) \leq \pcH(Q)$~\cite{marshall1979inequalities}.

Assume $M \geq 2$; the case where $M=1$ is trivial. Let $S$ be a string over alphabet $\Sigma$, and let $P=(p_1,p_2, \dots, p_{|\Sigma|})$ be a (non-ordered) distribution such that $p_i= \frac{|S|_{c_i}}{|S|}$ for all $i \in [|\Sigma|]$, where $c_i \in \Sigma$ is the $i$th character of the alphabet (without loss of generality, assume the latter is ordered).

\begin{lemma}
\label{lm:sorting}
    Let $P=(p_1,p_2, \dots, p_{n})$ and $Q=(q_1,q_2, \dots, q_{n})$ be two $n$-sized distributions such that $\sum_{i=1}^{k} p_i \geq \sum_{i=1}^{k} q_i^{\downarrow}$ for all $k \in [n]$. Then, $\sum_{i=1}^{k} p_i^{\downarrow} \geq \sum_{i=1}^{k} q_i^{\downarrow}$ for all $k \in [n]$.
\end{lemma}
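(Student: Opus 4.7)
\textbf{Proof plan for \wref{lm:sorting}.}
The plan is to reduce the claim to the trivial observation that sorting a sequence in non-increasing order can only \emph{increase} (or preserve) any prefix sum compared to the same number of terms taken in the original order. Concretely, for every $k\in[n]$, the quantity $\sum_{i=1}^{k} p_i^{\downarrow}$ is the maximum of $\sum_{i\in I} p_i$ over all index sets $I\subseteq[n]$ with $|I|=k$, simply because $p_1^{\downarrow},\dots,p_k^{\downarrow}$ are by definition the $k$ largest entries of $P$. Taking $I=\{1,\dots,k\}$ as one such candidate subset immediately yields
\[
        \sum_{i=1}^{k} p_i^{\downarrow}
    \wwrel\ge
        \sum_{i=1}^{k} p_i .
\]

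Combining this with the hypothesis $\sum_{i=1}^{k} p_i \ge \sum_{i=1}^{k} q_i^{\downarrow}$ for each $k\in[n]$ then gives, by transitivity,
\[
        \sum_{i=1}^{k} p_i^{\downarrow}
    \wwrel\ge
        \sum_{i=1}^{k} p_i
    \wwrel\ge
        \sum_{i=1}^{k} q_i^{\downarrow} ,
\]
which is the desired conclusion. No induction or majorisation machinery is needed; the two inequalities live at the same index $k$ and chain directly.

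There is essentially no hard step here: the only thing to verify is the set-maximum characterisation of $\sum_{i=1}^{k} p_i^{\downarrow}$, which is a one-line consequence of the rearrangement inequality (or just the definition of non-increasing order). The lemma is really a bookkeeping statement whose purpose, in the surrounding context, is to let us upgrade a partial-sum domination that happens to hold for $P$ in its \emph{given} order into the stronger-looking majorisation relation $Q \preceq P$ required to invoke Schur-concavity of the entropy.
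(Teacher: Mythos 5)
Your proof is correct and is essentially identical to the paper's: both hinge on the single observation that $\sum_{i=1}^{k} p_i^{\downarrow} \ge \sum_{i=1}^{k} p_i$ (sorting in non-increasing order cannot decrease prefix sums) and then chain this with the hypothesis. No differences worth noting.
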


\begin{proof}
    Sorting $P$ in non-increasing order does not reduce prefix sums, thus giving $\sum_{i=1}^k p_i^{\downarrow} \geq \sum_{i=1}^k p_i \geq \sum_{i=1}^k q^{\downarrow}_i$ for all $k \in [n]$.
\end{proof}

\begin{lemma}
    Let $k_1 < k_2 < \dots < k_n$ be the $n$ indices in $S$ picked by the LFC scheme (Algorithm~\ref{alg:lfc}, line 4). We have $k_i \leq (i-1)M+1$.
\end{lemma}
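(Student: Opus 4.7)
The plan is a straightforward pigeonhole argument. First, I would establish the bookkeeping invariant that after the $(i-1)$st iteration of the LFC scheme, exactly $(i-1)M$ positions of $S$ have been replaced by $\texttt{\lambda}$. This follows directly from inspection of lines~8--11 of \wref{alg:lfc}: in a single outer iteration, the inner loop runs once for each of the $M$ characters of the flagged block $\hat{X}_j$, and for each such character $l$ it replaces exactly one non-$\texttt{\lambda}$ occurrence of $l$ in $S$ by $\texttt{\lambda}$. (Lines~12--13 modify $\hat{A}$, not $S$.) Hence exactly $M$ new $\texttt{\lambda}$s appear in $S$ per iteration, and the count $(i-1)M$ follows by induction on $i$.

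With this count in hand, the bound is almost immediate. At the beginning of iteration $i$, at most $(i-1)M$ of the $nM$ positions of $S$ hold the symbol $\texttt{\lambda}$, so among the first $(i-1)M+1$ positions at least one is non-$\texttt{\lambda}$. Since $k_i$ is defined as the minimum index $k'$ with $S[k']\neq \texttt{\lambda}$, we conclude $k_i \leq (i-1)M+1$, as claimed.

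The only subtlety I anticipate is verifying that the $M$ replacements in lines~8--11 are always well-defined, i.e., that for each character $l$ of the just-flagged block $\hat{X}_j$ there is still a non-$\texttt{\lambda}$ occurrence of $l$ in $S$ at the moment it is processed. I would handle this by an auxiliary invariant maintained throughout the algorithm: for every unflagged block of $\hat{A}$, each of its characters has a distinct non-$\texttt{\lambda}$ occurrence in $S$. This invariant holds initially, because $S$ is built as a permutation of $A$ and every character of every block of $A$ appears in $S$, and it is preserved by each iteration, because the only characters of $S$ turned to $\texttt{\lambda}$ during one iteration are those matching the characters of the block just flagged. With this invariant, the algorithm is well-defined, the per-iteration count of $M$ replacements is correct, and the pigeonhole argument outlined above goes through.
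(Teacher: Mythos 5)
Your proof is correct and uses essentially the same argument as the paper: both rest on the observation that each outer iteration turns exactly $M$ positions of $S$ into $\texttt{\lambda}$, so after $i-1$ iterations only $(i-1)M$ positions are $\texttt{\lambda}$ and the leftmost non-$\texttt{\lambda}$ position is at most $(i-1)M+1$ (the paper phrases this as a contradiction, you phrase it directly via pigeonhole). Your additional well-definedness invariant is a sensible bit of extra care that the paper leaves implicit.
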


\begin{proof}
    Suppose, towards a contradiction, that at the $i$th selection, the leftmost non-$\texttt{\lambda}$ character in $S$ is at position $(i-1)M+2$ or greater. This implies that for all $j \in [(i-1)M+1]$, $S[j]=\texttt{\lambda}$.

    Since $i-1$ characters were selected previously, and $(M-1)(i-1)$ other characters were replaced by $\texttt{\lambda}$ each time, then the next non-$\texttt{\lambda}$ character must be at position at most $(M-1)(i-1)+(i-1)+1=M(i-1)+1$. However, we assumed that position was occupied by $\texttt{\lambda}$, which is a contradiction.
\end{proof}

\begin{lemma}
    \label{lemma:defaultisgood}
    Let $A$ be a string of size $nM$ and alphabet $\Sigma$, and construct its respective $S$ following the LFC scheme. Let $S'$ be obtained by copying $S$ and simultaneously deleting its characters at positions $(i-1)M+1$ for all $i \in [n]$. We then have $\pcH(S') \leq \pcH(S)$.
\end{lemma}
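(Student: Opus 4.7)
My plan is to show that the decreasingly sorted frequency distribution of $S'$ \emph{majorises} that of $S$, from which Schur-concavity of the entropy (as recalled above) immediately yields $\pcH(S') \leq \pcH(S)$.

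To set up notation, I will write $S = c_1^{f_1} c_2^{f_2} \cdots c_s^{f_s}$ with $s = |\Sigma|$ and $f_1 \leq f_2 \leq \cdots \leq f_s$; put $F_0 = 0$ and $F_j = f_1 + \cdots + f_j$, so that character $c_j$ occupies the positions $[F_{j-1}+1, F_j]$ of $S$ and $F_s = nM$. The deleted positions $\{(i-1)M+1 : i \in [n]\}$ are exactly the integers in $[1, nM]$ congruent to $1 \bmod M$, and a short count shows that $[1,F_j]$ contains precisely $\lceil F_j/M \rceil$ of them. Hence block $j$ loses $D_j := \lceil F_j/M \rceil - \lceil F_{j-1}/M \rceil$ characters, the new frequency of $c_j$ in $S'$ is $f'_j := f_j - D_j \geq 0$, and $\sum_j f'_j = n(M-1) =: N'$ while $N := nM$.

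By \wref{lm:sorting}, to conclude that $(f'_j/N')^{\downarrow}$ majorises $(f_j/N)^{\downarrow}$ it suffices to exhibit \emph{some} ordering of the new values $f'_j/N'$ whose partial sums dominate the \emph{decreasing} partial sums of $f_j/N$. I will list the new frequencies in the order $j = s, s-1, \ldots, 1$; since the $f_j$ are increasing in $j$, the associated old-side partial sums $\sum_{j=s-k+1}^{s} f_j/N$ are automatically the decreasing partial sums. For each $k \in [s]$, the required inequality
\[
        \sum_{j=s-k+1}^{s} \frac{f_j - D_j}{N'}
    \;\geq\;
        \sum_{j=s-k+1}^{s} \frac{f_j}{N}
\]
simplifies, using $\sum_{j=s-k+1}^{s} f_j = nM - F_{s-k}$ and $\sum_{j=s-k+1}^{s} D_j = n - \lceil F_{s-k}/M \rceil$, to the trivial bound $M \lceil F_{s-k}/M \rceil \geq F_{s-k}$.

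The main obstacle is choosing the right ordering in which to apply \wref{lm:sorting}; this is where intuition about the LFC scheme enters the proof. Listing the new frequencies in order of \emph{decreasing original} frequency reflects the fact that the evenly spaced deletion positions hit each low-frequency prefix $[1,F_j]$ at density $\lceil F_j/M\rceil / F_j \geq 1/M$---at least the average deletion density---so the high-frequency blocks never lose more than their proportional share of mass. Once this ordering is fixed, the algebra collapses to a ceiling inequality, and Schur-concavity finishes the argument.
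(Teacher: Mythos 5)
Your proof is correct. It follows the same overall strategy as the paper's proof of this lemma---compute how many deletions each character block absorbs, establish a prefix-sum domination between the old and new frequency vectors, and then invoke Lemma~\ref{lm:sorting} together with Schur-concavity of the entropy---but the way you establish the domination is genuinely different and, in my view, cleaner. The paper derives an exact identity for the prefix sums by induction over the blocks, with a two-case analysis depending on the alignment of block boundaries modulo $M$ and explicit floor/ceiling bookkeeping. You instead observe that the number of deleted positions in $[1,F_j]$ is exactly $\lceil F_j/M\rceil$, so the per-block deletion counts $D_j=\lceil F_j/M\rceil-\lceil F_{j-1}/M\rceil$ telescope, and the required inequality for every suffix collapses to $M\lceil F_{s-k}/M\rceil\ge F_{s-k}$. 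This eliminates the induction and the case distinction entirely. A further point in your favour: you accumulate from the \emph{high}-frequency end, which is the ordering that Lemma~\ref{lm:sorting} actually needs (the hypothesis compares against the decreasingly sorted old distribution), and your worked direction of the inequality is consistent with a direct check on the paper's own example $A=\texttt{abracadabraa}$; the paper's write-up indexes the blocks in increasing frequency order, where the analogous prefix-sum inequality points the other way, so your version is also the more carefully stated one. The only implicit hypotheses are $M\ge 2$ (assumed in the surrounding text; for $M=1$ the statement is degenerate) and $f_j'\ge 0$, which you correctly justify since $D_j$ counts positions inside block $j$.
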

\begin{proof}
    Let $f_1 \leq f_2 \leq \dots f_{|\Sigma|}$ be the normalised frequencies of the characters of $\Sigma$ in their order of appearance in $S$ (by the way $S$ is constructed, these must be non-decreasing), and let $f'_i$ be the respective normalised frequency of the $i$th distinct character of $S$ in $S'$; for instance, suppose $f_5$ is the frequency of the fifth distinct character $c$ in $S$, then $f'_5$ is the frequency of $c$ in $S'$ (i.e. after the $n$ deletions in $S$). 
    
    The post-deletion frequency $f'_i$ can be expressed in terms of $f_i$ under two distinct cases:
    \begin{equation}
    \label{eq:freqexp}
    f'_k =
    \begin{dcases}
    \frac{nMf_i-\lceil nf_k\rceil}{n(M-1)},& \text{(case 1) if \,\,\,} \Biggl(n\sum_{i=1}^{k-1}f_i\Biggr)\bmod{M} \rel\geq M - (nf_k \bmod{M}) \\ \\
    \frac{nMf_i-\lfloor nf_k\rfloor}{n(M-1)},              & \text{(case 2) otherwise.}
    \end{dcases}
    \end{equation}

    \noindent From those expressions, we want to show the following equality:
    \begin{equation}
    \label{eq:freqsums}
                \sum_{i=1}^k f'_i = \sum_{i=1}^k f_i+\frac{(n\sum_{i=1}^k f_i)\bmod{M}}{nM(M-1)};
    \end{equation}
    to do so, we proceed by induction.
    
    \paragraph{Base case ($k=1$)} 
    Clearly, the second case of Equation~\ref{eq:freqexp} applies for (one of) the largest character(s) (of frequency $f_k$ in $S$), since there are no previous frequencies to add up, so $(n\sum_{i=1}^{0}f_i)\bmod{M} = 0 < M - (nf_1 \bmod{M})$ (the RHS can never be $0$). Hence, $f'_1 = \frac{Mnf_1-\lfloor nf_1 \rfloor}{n(M-1)} \geq f_1$. The last inequality holds because:
\begin{align*}
    \lfloor nf_1 \rfloor &\leq nf_1 \\
-M\lfloor nf_1 \rfloor &\geq -Mnf_1 \\
M^2nf_1-M\lfloor nf_1 \rfloor &\geq M^2nf_1-Mnf_1 \\
M(Mnf_1-\lfloor nf_1 \rfloor) &\geq Mnf_1(M-1) \\
Mnf_1-\lfloor nf_1 \rfloor &\geq nf_1(M-1) \\
\frac{Mnf_1-\lfloor nf_1 \rfloor}{n(M-1)} &\geq \frac{nf_1(M-1)}{n(M-1)} \\
\frac{Mnf_1-\lfloor nf_1 \rfloor}{n(M-1)} &\geq f_1.
\end{align*}

\paragraph{Induction hypothesis} 

Assume that 
\begin{equation*}
    \sum_{i=1}^k f'_i = \sum_{i=1}^k f_i+\frac{(n\sum_{i=1}^k f_i)\bmod{M}}{(M-1)Mn}
\end{equation*}
for some $k$. Let us show that the equality holds when adding $f'_{k+1}$ in either of the two cases of Equation~\ref{eq:freqexp}.

\paragraph{Case 1} 
We delete the character $\lceil nf_{k+1} \rceil$ times, i.e.
\begin{equation*}
    f'_{k+1} = \frac{Mnf_{k+1}-\lceil nf_{k+1} \rceil}{n(M-1)}.
\end{equation*}

\noindent Recall that this case occurs if and only if
\begin{equation*}
    \bigg(n\sum_{i=1}^{k}f_i\bigg)\bmod{M} \wrel\geq M - (nf_{k+1} \bmod{M}).
\end{equation*}

\noindent Therefore, we have
\begin{equation*}
    \frac{((n\sum_{i=1}^{k}f_i)\bmod{M})}{n(M-1)} \geq \frac{M - (f_{k+1}n \bmod{M})}{n(M-1)}
\end{equation*}
and since
\begin{equation*}
    M-(nf_{k+1} \bmod{M}) = M(\lceil nf_{k+1}\rceil-nf_{k+1}),
\end{equation*}
then
\begin{equation*}
    \frac{((n\sum_{i=1}^{k}f_i)\bmod{M})}{n(M-1)} \geq \frac{ M(\lceil nf_{k+1}\rceil-nf_{k+1})}{n(M-1)}.
\end{equation*}

\noindent Now, back to the assumed equality. Let us add the new frequency on both sides. We thus get
\begin{equation*}
    \sum_{i=1}^k f'_i + f'_{k+1}= \sum_{i=1}^k f_i+\frac{(n\sum_{i=1}^k f_i)\bmod{M}}{(M-1)Mn}+ f'_{k+1},
\end{equation*}
hence giving us
\begin{equation*}
    \sum_{i=1}^{k+1} f'_i= \sum_{i=1}^k f_i+\frac{(n\sum_{i=1}^k f_i)\bmod{M}}{(M-1)Mn}+ \frac{Mnf_{k+1}-\lceil nf_{k+1} \rceil}{n(M-1)}.
\end{equation*}

\noindent By splitting the fraction, we get:
\begin{equation*}
    \sum_{i=1}^{k+1} f'_i= \sum_{i=1}^k f_i+\frac{(n\sum_{i=1}^k f_i)\bmod{M}}{(M-1)Mn}+ \frac{nf_{k+1}-\lceil nf_{k+1} \rceil}{n(M-1)}+\frac{n(M-1)f_{k+1}}{n(M-1)}
\end{equation*}
which gives us, by cancelling out the factors in the last fraction,
\begin{align*}
    \sum_{i=1}^{k+1} f'_i &= \sum_{i=1}^k f_i+\frac{(n\sum_{i=1}^k f_i)\bmod{M}}{(M-1)Mn}+ \frac{nf_{k+1}-\lceil nf_{k+1} \rceil}{n(M-1)}+f_{k+1} \\
    \sum_{i=1}^{k+1} f'_i &= \sum_{i=1}^{k+1} f_i+\frac{(n\sum_{i=1}^{k} f_i)\bmod{M}}{(M-1)Mn}+ \frac{nf_{k+1}-\lceil nf_{k+1} \rceil}{n(M-1)}.
\end{align*}

\noindent By multiplying the numerator and denominator by $M$ in the last summand, we obtain
\begin{align*}
    \sum_{i=1}^{k+1} f'_i= \sum_{i=1}^{k+1} f_i+\frac{(n\sum_{i=1}^{k} f_i)\bmod{M}}{(M-1)Mn}+ \frac{M(nf_{k+1}-\lceil nf_{k+1} \rceil)}{Mn(M-1)}
    \\
    \sum_{i=1}^{k+1} f'_i= \sum_{i=1}^{k+1} f_i+\frac{(n\sum_{i=1}^{k} f_i)\bmod{M}}{(M-1)Mn}- \frac{M(\lceil nf_{k+1} \rceil -nf_{k+1})}{Mn(M-1)}.
\end{align*}

\noindent Note that we can already establish that $\sum_{i=1}^{k+1} f'_i \geq \sum_{i=1}^{k+1} f_i$ since
\begin{equation*}
    \frac{(n\sum_{i=1}^{k} f_i)\bmod{M}}{(M-1)Mn} \geq \frac{M(\lceil nf_{k+1} \rceil -nf_{k+1})}{Mn(M-1)} \geq 0.
\end{equation*}

\noindent Finally, we have
\begin{align*}
    \sum_{i=1}^{k+1} f'_i &= \sum_{i=1}^{k+1} f_i+\frac{(n\sum_{i=1}^{k} f_i)\bmod{M}}{(M-1)Mn}- \frac{M-(nf_{k+1} \bmod{M})}{Mn(M-1)}
\\
    \sum_{i=1}^{k+1} f'_i &= \sum_{i=1}^{k+1} f_i+\frac{(n\sum_{i=1}^{k+1} f_i)\bmod{M}}{(M-1)Mn},
\end{align*}
where the last equality holds because
\begin{equation*}
    \bigg(n\sum_{i=1}^{k+1} f_i\bigg)\bmod{M}+(nf_{k+1})\bmod{M} \geq M,
\end{equation*}
which strictly follows from the condition of the case.

\paragraph{Case 2} 
We delete the character $\lfloor nf_{k+1} \rfloor$ times, i.e.
\begin{equation*}
    f'_{k+1} = \frac{Mnf_{k+1}-\lfloor nf_{k+1} \rfloor}{n(M-1)}.
\end{equation*}

\noindent Again, let us add the new frequency on both sides. We have
\begin{align*}
    \sum_{i=1}^k f'_i + f'_{k+1} &= \sum_{i=1}^k f_i+\frac{(n\sum_{i=1}^k f_i)\bmod{M}}{(M-1)Mn}+ f'_{k+1} \\
    \sum_{i=1}^{k+1} f'_i &= \sum_{i=1}^k f_i+\frac{(n\sum_{i=1}^k f_i)\bmod{M}}{(M-1)Mn}+ \frac{Mnf_{k+1}-\lfloor nf_{k+1} \rfloor}{n(M-1)}.
\end{align*}

\noindent By splitting the fraction, we get
\begin{align*}
    \sum_{i=1}^{k+1} f'_i &= \sum_{i=1}^k f_i+\frac{(n\sum_{i=1}^k f_i)\bmod{M}}{(M-1)Mn}+ \frac{nf_{k+1}-\lfloor nf_{k+1} \rfloor}{n(M-1)}+\frac{n(M-1)f_{k+1}}{n(M-1)} \\
    \sum_{i=1}^{k+1} f'_i &= \sum_{i=1}^k f_i+\frac{(n\sum_{i=1}^k f_i)\bmod{M}}{(M-1)Mn}+ \frac{nf_{k+1}-\lfloor nf_{k+1} \rfloor}{n(M-1)}+f_{k+1} \\
    \sum_{i=1}^{k+1} f'_i &= \sum_{i=1}^{k+1} f_i+\frac{(n\sum_{i=1}^k f_i)\bmod{M}}{(M-1)Mn}+ \frac{nf_{k+1}-\lfloor nf_{k+1} \rfloor}{n(M-1)}.
\end{align*}

\noindent By multiplying the numerator and denominator by $M$ in the last summand, we obtain
\begin{align*}
    \sum_{i=1}^{k+1} f'_i &= \sum_{i=1}^{k+1} f_i+\frac{(n\sum_{i=1}^{k} f_i)\bmod{M}}{(M-1)Mn}+ \frac{M(nf_{k+1}-\lfloor nf_{k+1} \rfloor)}{Mn(M-1)} \\
    \sum_{i=1}^{k+1} f'_i &= \sum_{i=1}^{k+1} f_i+\frac{(n\sum_{i=1}^{k} f_i)\bmod{M}+M(nf_{k+1}-\lfloor nf_{k+1} \rfloor)}{(M-1)Mn} \\
    \sum_{i=1}^{k+1} f'_i &= \sum_{i=1}^{k+1} f_i+\frac{(n\sum_{i=1}^{k} f_i)\bmod{M}+nf_{k+1} \bmod{M}}{(M-1)Mn} \\
    \sum_{i=1}^{k+1} f'_i &= \sum_{i=1}^{k+1} f_i+\frac{(n\sum_{i=1}^{k+1} f_i)\bmod{M}}{(M-1)Mn},
\end{align*}
where the last equation strictly follows from the condition of the case, i.e.
\begin{equation*}
    \bigg(n\sum_{i=1}^{k+1} f_i\bigg)\bmod{M}+(nf_{k+1})\bmod{M} < M.
\end{equation*}

\noindent In either case, the equality holds. Notice that we have, for all $k \in [|\Sigma|]$:
\[
\sum_{i=1}^{k} f'_i \geq \sum_{i=1}^{k} f_i
\]
since $\frac{(n\sum_{i=1}^k f_i)\bmod{M}}{(M-1)Mn}$ is non-negative.

Let $P=(f_1,f_2, \dots, f_{|\Sigma|})$ and $P'=(f'_1,f'_2, \dots, f'_{|\Sigma|})$. Clearly, by Lemma~\ref{lm:sorting} (note that $P=P^{\downarrow}$), the last inequality implies that $P^{\downarrow} \preceq P'^{\downarrow}$. This further implies that $\pcH(P'^{\downarrow}) \leq \pcH(P^{\downarrow})$, and the lemma follows.
\end{proof}

Let $k_1 < k_2 < \dots < k_n$ be the indices picked by the LFC scheme. Given that $k_i \leq (i-1)M+1$ for $i \in [n]$, then any set of indices from the scheme can be obtained by setting all the indices to their respective upper bounds and \emph{shifting} them to the left (i.e. decreasing their respective $k$).

\begin{lemma}
\label{lemma:shiftingisokay}
    Let $k_1 < k_2 < \dots < k_n$ be the indices selected by the LFC scheme and let
    \[
\sum_{i=1}^{j} f'_i \geq \sum_{i=1}^{j} f_i
\] 
for all $j \in [n]$. If some $k_i$ is set to $k_i-x$, with $x \in \mathbb{N}$, such that $k_i \geq 1$ and $k_i \not= k_j$ for all $j \in [n]$, then the inequality still holds.
\end{lemma}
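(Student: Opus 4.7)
My plan is to analyse how the single shift $k_i \to k_i - x$ changes the frequency vector $f'$, by case analysis on whether the shifted position lies in the same character block of $S$ as the original position.

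Since $S$ is constructed so that characters appear in non-decreasing order of frequency, its distinct characters occupy contiguous blocks $[p_{a-1}+1,\,p_a]$ with $p_a = nM \sum_{i\leq a} f_i$. Let $c_b := S[k_i]$ be the character previously deleted at position $k_i$ and $c_a := S[k_i - x]$ the character now deleted at the shifted position. Since $k_i - x < k_i$, we have $a \leq b$. If $a = b$, then both positions lie in the same block, the multiset of deleted characters (and hence every $f'_j$) is unchanged, and the hypothesised inequality persists trivially.

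In the substantive case $a < b$, the shift reduces the count of $c_a$ in $S'$ by one and increases the count of $c_b$ by one. Setting $\epsilon := 1/(n(M-1))$, the new frequencies satisfy $f''_a = f'_a - \epsilon$, $f''_b = f'_b + \epsilon$, and $f''_j = f'_j$ otherwise. Consequently the prefix sum $\sum_{i=1}^{j} f''_i$ coincides with $\sum_{i=1}^{j} f'_i$ for $j \notin [a, b-1]$, so those inequalities carry over from the hypothesis with no further work. Only for $j \in [a, b-1]$ does the prefix sum drop, and by exactly $\epsilon$, so we must exhibit slack of that size in the pre-shift inequality.

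The key identity I would exploit is
\[
    \sum_{i=1}^{j} f'_i \wwrel= \frac{nM \sum_{i=1}^{j} f_i - |K \cap [1, p_j]|}{n(M-1)}
\]
where $K$ is the current deletion set; this reduces the desired inequality to the integer bound $|K' \cap [1, p_j]| \leq n \sum_{i=1}^{j} f_i$ for $K' = (K \setminus \{k_i\}) \cup \{k_i - x\}$. I would then reduce to the unit case ($x = 1$, $b = a+1$, so only $j = a$ is affected) by induction and verify the bound by exploiting the structural admissibility constraints $k_\ell \leq (\ell-1)M+1$, which control how densely deletion points can appear in initial segments of $S$. The main obstacle will be this final slack estimate: showing that a single boundary-crossing shift consumes only as much slack as the hypothesised inequality already provides and that successive shifts compose without exhausting it.
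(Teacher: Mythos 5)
Your case split (same character vs.\ different characters at the two positions) and your bookkeeping of the perturbation ($\pm\epsilon$ with $\epsilon = 1/(n(M-1))$ at two coordinates, affecting only the prefix sums for $j$ strictly between them) match the paper's proof. The gap is in the orientation of the frequency vector, and it is fatal to your route. You index the $f_i$ in \emph{non-decreasing} order of frequency (order of appearance in $S$), so the newly deleted character $c_a=S[k_i-x]$ is the \emph{rarer} one and sits at the \emph{smaller} index $a$; the shift then decreases the prefix sums for $j\in[a,b-1]$ by $\epsilon$, and you are forced to hunt for slack of size $\epsilon$. That slack does not exist: the identity established in the proof of Lemma~\ref{lemma:defaultisgood} shows the surplus $\sum_{i\le j}f'_i-\sum_{i\le j}f_i$ equals $\bigl((n\sum_{i\le j}f_i)\bmod M\bigr)/(nM(M-1)) < 1/(n(M-1))=\epsilon$, and your proposed integer bound $|K'\cap[1,p_j]|\le p_j/M$ is contradicted, not supported, by the constraints $k_\ell\le(\ell-1)M+1$, which force \emph{at least} $\lceil p_j/M\rceil$ deletion points into $[1,p_j]$. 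So under your reading the statement you are trying to prove is false, and no completion of the ``final slack estimate'' can exist.

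The lemma (and the majorisation endgame of the appendix, which needs $P=P^{\downarrow}$) is meant with the $f_i$ sorted in \emph{non-increasing} order of frequency. Under that convention the character at the earlier position $k_i-x$ is the rarer one and therefore carries the \emph{larger} index $b$, while the restored character $S[k_i]$ carries the smaller index $a<b$; the shift moves $\epsilon$ of mass from coordinate $b$ to coordinate $a$, so every prefix sum $\sum_{i\le j}f'_i$ weakly \emph{increases} and the hypothesised inequalities are preserved with no slack needed. That single observation is the paper's entire proof. (The paper's own text says ``non-decreasing'' when introducing the $f_i$, so your reading is understandable; but the direction of this transfer is the whole content of the lemma, and getting it right turns your hard open step into a one-line monotonicity argument.)
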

\begin{proof}
    Suppose $S[k_i]=S[k_i-x]$. Clearly, \emph{shifting} the index to the left does not change the frequency of any character in $\Sigma$; therefore, in that case, the inequality will still hold.

    Now, suppose $S[k_i]\not=S[k_i-x]$ and let $f'_a$ and $f'_b$ be the impacted frequencies, with $a < b$. Shifting the index is equivalent to restoring the character at position $k_i$ and deleting the character at position $k_i-x$. Thus, $f'_b$ decreases by some value $\delta$, and $f'_a$ increases by $\delta$. This clearly does not affect the prefix sums, and the lemma follows.
\end{proof}

\begin{proofof}{\wref{lem:conjecture}}
    Construct $S$, initially set the ``deleting'' indices to $1, M+1, \dots, (n-1)M+1$, and let $S'$ be the string obtained by deleting the characters of $S$ at those indices. By Lemma~\ref{lemma:defaultisgood}, $\pcH(S') \leq \pcH(S)$. We know that any set of indices computed by the scheme on $A$ must result with indices $k_i \leq (i-1)M+1$; so, for each index $k_i$, shift them to wherever the LFC scheme would put them on input $A$. That shifting still does not increase the entropy beyond that of $S$, by Lemma~\ref{lemma:shiftingisokay}; thus $\pcH(S') \leq \pcH(S)$ holds after the shifts.
    
    Since $\pcH(A)=\pcH(S)$, $\pcH(S') = \pcH(A')$ and $\pcH(S') \leq \pcH(S)$, it follows that $\pcH(A') \leq \pcH(A)$, completing the proof.
\end{proofof}

\begin{remark}[Equal-sized blocks needed]
	We point out that, in contrast to \wref{lem:conjecture}, 
	allowing blocks of \emph{different} lengths in blocked character deletion
	can make it impossible to achieve $\pcH(A')\le \pcH(A)$.
\end{remark}


%
%

%
%
%
%
%
%
%
%
%
%
%
%
%
%
%
%
%
%
%
%
%
%
%
%
%
%
%
%
%
%
%
%
%
%
%
%
%
%
%
%
%
%
%
%
%
%
%
%
%
%
%
%
%
%
%
%
%
%
%
%
%
%
%
%
%

%
%
%
%
%

\end{document}